\newcommand{\cent}[0]{\mbox{\textcent}}
\newcommand{\dollar}[0]{\$}
\newtheorem{fact}{Fact}
\definecolor{CLNOTE}{rgb}{0,0,.6}
\definecolor{CLNOTERED}{rgb}{.8,0,0}
\definecolor{CLTYPO}{rgb}{.8,0,.8}
\title{Unary languages recognized by two-way one-counter automata}
\author{Marzio De Biasi\inst{1} \and  Abuzer Yakary{\i}lmaz\inst{2,3}$^,$\thanks{Yakary{\i}lmaz was partially supported by CAPES, ERC Advanced Grant MQC, and FP7 FET project QALGO.}}
\institute{\email{marziodebiasi@gmail.com}
\and
University of Latvia, Faculty of Computing, Raina bulv. 19, R\={\i}ga, LV-1586, Latvia 
\and
National Laboratory for Scientific Computing, Petr\'{o}polis, RJ, 25651-075, Brazil \\
\email{abuzer@lncc.br}
}
\authorrunning{De Biasi and Yakary{\i}lmaz} 
\definecolor{gray}{RGB}{128,128,128}
\definecolor{reg}{RGB}{255,0,0}
\begin{document}

\maketitle

\begin{abstract}
A two-way deterministic finite state automaton with  one counter (2D1CA) is a fundamental computational model that has been examined in many different aspects since sixties,
but we know little about its power in the case of unary languages.
Up to our knowledge, the only known unary nonregular languages recognized by 2D1CAs are those formed by strings having exponential length, where the exponents form some trivial unary regular language. In this paper, we present some non-trivial subsets of these languages. By using the input head as a second counter, we present  simulations of two-way deterministic finite automata with linearly bounded counters and linear--space Turing machines. We also show how a fixed-size quantum register can help to simplify some of these languages.  Finally, we compare unary 2D1CAs with two--counter machines and provide some insights about the limits of their computational power. 

\end{abstract}

\section{Introduction}

A finite automaton with one-counter is a fundamental model in automata theory. It has been examined in many different aspects since sixties
\cite{FMR68}. One recent significant result, for example, is that the equivalence problem of deterministic one-way counter automata is NL-complete \cite{BGJ13}. After introducing quantum automata \cite{MC00,KW97} at the end of the nineties, quantum counter automata have also been examined (see a very recent research work in \cite{Yak13B}). 

 A \emph{counter} is a very simple working memory  which can store an arbitrary long integer that can be incremented or decremented; but only a single bit of information can be retrieved from it: whether its value is zero or not. It is a well-known fact that a two-way deterministic finite automaton with two counters is universal \cite{Min61,Min67,Mor96}. Any language recognized by a two-way deterministic finite automaton with one counter (2D1CA), on the other hand, is in deterministic logarithmic space ($ \mathsf{L} $) \cite{Pet12A}.\footnote{\label{foot:2D1CA}Since a 2D1CA using super-linear space on its counter should finally enter an infinite loop, any useful algorithm can use at most linear space on a counter, which can be simulated by a logarithmic binary working tape.} Replacing the counter of a 2D1CA with a stack, we get a  two-way deterministic pushdown automaton (2DPDA), that can recognize more languages \cite{DG82A}. Similarly, nondeterminism also increases the class of the languages recognized by 2D1CAs \cite{Chr84}. 

Unary or tally languages, defined over a single letter alphabet, have deserved special attention.  When the input head is not allowed to move to the left (\emph{one-way head}), it is a well-known fact that unary nondeterministic pushdown automata can recognize only  regular languages \cite{GR62}. The same result was shown for bounded-error probabilistic pushdown automata, too \cite{KGF97}. Currently, we do not know whether ``quantumness'' can add any power. Their alternating versions were shown to be quite powerful: they can recognize any unary language in deterministic exponential time with linear exponent \cite{CKS81}. But, if we replace the stack with a counter, only a single family of unary nonregular languages \cite{Dur13A} is known:
$
	\mathtt{UPOWER(k)} = \left\lbrace a^{k^n} \mid n \geq 1 \right\rbrace ,
$ for a given integer $ k \geq 2 $.
In the case of two-way head, we know that the unary encoding of every language in deterministic polynomial time ($ \mathsf{P} $) can be accepted by 2DPDAs \cite{Mon84}; however we do not know whether 2DPDAs are more powerful than 2D1CAs (see also \cite{IJTW93}) on unary languages like in the case of binary languages. Any separation between $ \mathsf{L} $ and $ \mathsf{P} $ can of course answer this question positively, but, it is still one of the big open problems in complexity theory. On the other hand, researchers also proposed some simple candidate languages not seemingly accepted by any 2D1CA \cite{IJTW93,Pet94}, e.g.
$
	\mathtt{USQUARE} = \left\lbrace a^{n^2} \mid n \geq 1 \right\rbrace.
$
Although it was shown that two--counter machines (2CAs) cannot recognize $\tt USQUARE $ if the input counter is initialized with $n^2$ (i.e. no G\"odelization
is allowed) \cite{IT93,Sch72}, up to our knowledge, there is not any known nondeterministic, alternating, or probabilistic one-counter automaton for it.
We only know that $ \mathtt{USQUARE} $ can be recognized by exponential expected time 2D1CAs augmented with a fixed-size quantum register \cite{Yak13B} or realtime private alternating one-counter automata \cite{DHRSY14}. Apart from this open problem, we do not know much about which nonregular unary languages can be recognized by 2D1CAs. In this paper, we provide some answers to this question. 
In his seminal paper \cite{Min61}, Minsky showed that the emptiness problem for 2D1CAs on unary languages is undecidable. In his proof, he presented a simulation of two-way deterministic finite automaton with two counters on the empty string by a 2D1CA using its input head as a second counter. We use a similar idea but as a new programming technique for 2D1CAs on unary languages that allows to simulate multi-counter automata and space bounded Turing machines operating on unary or general alphabets. A 2D1CA can take the input and the working memory of the simulated machine as the exponent of some integers encoded on unary inputs. Thus, once the automaton becomes sure about the correctness of the encoding, it can start a two-counter simulation of the given machine, in which the second counter is implemented by the  head on the unary input. Based on this idea, we will present several new nonregular unary languages recognized by 2D1CAs. Our technique can be applicable to nondeterministic, alternating, and probabilistic cases in a straightforward way. We also show that using a constant-size quantum memory can help to replace the encoding on binary alphabets with unary alphabets. Finally we compare unary 2D1CAs with 2CAs and provide some insights about the limits of their computational power.

\section{Background}

Throughout the paper, $ \Sigma $  denotes the input alphabet and  the extra symbols  $ \cent $ and $ \dollar $ are the \textit{end-markers} (the tape alphabet is $ \tilde{\Sigma} = \Sigma \cup \{ \cent,\dollar \} $). For a given string $ w $, $ w^r $ is the reverse of $ w $, $ |w| $ is the length of $ w $, and $ w_{i} $ is the $ i^{th} $ symbol of $ w $, where $ 1 \leq i \leq |w| $. The string $ \cent w \dollar $ is represented by $ \tilde{w} $. 
Each counter model defined in the paper has a two-way finite read-only input tape whose squares are indexed by integers. Any given input string, say $ w \in \Sigma^{*} $, is placed on the tape as $ \tilde{w} $ between the squares indexed by 0 and $ | w | +1 $. The tape has a single head, and it can stay in the same position ($ \downarrow $) or move to one square to the left ($ \leftarrow $) or to the right ($ \rightarrow $) in one step. It must always be guaranteed that the input head never leaves $ \tilde{w} $. A counter can store an integer and has two observable states: \textit{zero} ($ 0 $) or \textit{nonzero} ($ \pm $), and can be updated by a value from $ \{-1,0,+1\} $ in one step. Let $\Theta = \{ 0, \pm \}$.

A \textit{two-way deterministic one-counter automaton} (2D1CA) is a two-way deterministic finite automaton with a counter. Formally, a 2D1CA $ \mathcal{D} $ is a 6-tuple
\[
	\mathcal{D} = (S,\Sigma,\delta,s_{1},s_{a},s_{r}) ,
\] 
where $ S $ is the set of states, $ s_1 \in S $ is the initial state, $ s_a,s_r \in S $ ($ s_a \neq s_r $) are the accepting and rejecting states, respectively, and $ \delta $  is the transition function governing the behaviour of $ \mathcal{D} $ in each step, i.e. 
\[
	\delta: S \setminus \{s_a,s_r\} \times \tilde{\Sigma} \times \Theta \rightarrow S \times \mspace{-5mu} \{\leftarrow,\downarrow,\rightarrow\} \mspace{-5mu} \times \{-1,0,+1\}. 
\]
Specifically, $ \delta(s,\sigma,\theta) \rightarrow (s',d_i,c) $ means that when $ \mathcal{D} $ is in state $ s \in S \setminus \{s_a,s_r\} $, reads symbol $ \sigma \in \tilde{\Sigma} $, and the state of its counter is $ \theta \in \Theta $, then it updates its state to $ s' \in S $ and the position of the input head with respect to $ d_i \in \{\leftarrow,\downarrow,\rightarrow\} $, and adds $ c \in \{-1,0,+1\} $ to the value of the counter. In order to stay on the boundaries of $\tilde w$, if $ \sigma= \cent$ then $d_i \in \{\downarrow, \rightarrow \}$ and if $\sigma= \dollar$ then $d_i \in \{\downarrow, \leftarrow \}$.

At the beginning of the computation, $ \mathcal{D} $ is in state $ s_1 $, the input head is placed on symbol $ \cent $, and the value of the counter is set to zero. A configuration of $ \mathcal{D} $ on a given input string is represented by a triple $ (s,i,v) $, where $ s $ is the state, $ i $ is the position of the input head, and $ v $ is the value of the counter. The computation ends and the input is accepted (resp. rejected) by $ \mathcal{D} $ when it enters $ s_a $ (resp. $ s_r $). 

For any $k>1$, a \textit{two-way deterministic $k$-counter automaton} (2D$k$CA) is a generalization of a 2D1CA and is equipped with $k$ counters; in each transition, it checks the  states of all counters and then updates their values.
Moreover, we call a counter \textit{linearly bounded} if its value never exceeds $O(|w|)$, where $w$ is the given input. But restricting this bound to $|w|$ does not change the computational power of any kind of automaton having linearly bounded counters, i.e. the value of any counter can be compressed by any rational number by using extra control states. A \emph{two-counter automaton} (2CA) is a 2D2CA  over a unary alphabet and without the input tape: the length of the unary input is placed  in one of the counters at the beginning of the computation. We underline that
the length of the unary input $a^n$ is placed in the counter as it is: indeed if we allow a suitable
encoding of the input (by G\"{o}delization, e.g. setting its initial value to $2^n$) a 2CA can simulate any Turing machine \cite{Min61,Sch72}.

We replace ``D'' that stands for \textit{deterministic} in the abbreviations of deterministic machines with ``N'', ``A'', and ``P'' for representing the abbreviations of their \textit{nondeterministic}, \textit{alternating}, and \textit{probabilistic} counterparts. 

We finish the section with some useful technical lemmas.


\begin{lemma}
	\label{lemma:UPOWER}
	2D1CAs can check whether a given string is a member of language $ \mathtt{UPOWER(k)}=\{a^{k^n} \mid n \geq 1\} $, with $k\geq 2$.
\end{lemma}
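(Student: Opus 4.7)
The plan is to decide $\mathtt{UPOWER(k)}$ by iteratively dividing $|w|$ by $k$ and checking that the quotient eventually reaches $1$. The trick on a 2D1CA is to use the input head position as the ``main'' register holding the current integer value, and the counter only as scratch space for one division step at a time. First I would move the head to the rightmost $a$, so that its distance from $\cent$ equals $|w|$; this is the initial value being tested.

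The core primitive is a subroutine that, starting with the head at some position $p \geq 1$ and counter $=0$, either repositions the head at $p/k$ with counter $=0$ (when $k \mid p$) or else rejects. I would implement it in two sweeps. \emph{Phase~I}: walk the head leftward, maintaining a mod-$k$ residue in the finite control. Every $k$ left-moves form one block, at the end of which the counter is incremented by one. When the head reaches $\cent$, the residue is inspected; if nonzero, reject, otherwise the counter now holds exactly $p/k$. \emph{Phase~II}: walk the head rightward, decrementing the counter at every step, and stop the moment the counter-zero test $\theta = 0$ fires. At that moment the head is on position $p/k$ and the counter is back to zero, restoring the invariant for the next iteration.

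The outer loop alternates a call to this subroutine with a test of whether the head is on the first $a$ (adjacent to $\cent$): accept in that case, iterate otherwise. For $|w| = k^n$ with $n \geq 1$, the $n$-th successful division leaves the head at position $1$ and the machine accepts; for any other length some division fails the $\pmod k$ check and rejects, including the corner case $|w| = 1$, which is rejected by the very first call since $k \nmid 1$ when $k \geq 2$. The only place requiring a little care is the bookkeeping at $\cent$ during Phase~I, which must distinguish ``just completed a full block and landed on $\cent$'' from ``hit $\cent$ mid-block''; both outcomes are detected by pairing the finite-state residue with the end-marker symbol. The counter never exceeds $|w|/k$ throughout, so the construction sits comfortably within the $O(|w|)$-space regime of 2D1CAs.
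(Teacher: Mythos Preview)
Your argument is correct, but it proceeds in the opposite direction from the paper's proof. The paper \emph{multiplies up}: at stage $i$ it holds $k^{i}$ on the counter, walks the head rightward while decrementing, and checks whether the head lands exactly on $\dollar$; if it overshoots it rejects, if it undershoots it sweeps back multiplying the counter by $k$ and tries again. You \emph{divide down}: you keep the current value as a head position, use the counter only to perform one exact division by $k$, and accept when the value collapses to $1$. Both schemes are linear in $|w|$ on the counter and terminate in $O(\log_k |w|)$ rounds, so neither dominates the other in resources. Your version has the pleasant feature that it already exercises the ``input head as second register'' idiom that the rest of the paper relies on, whereas the paper's version is slightly more self-contained because it never needs to remember a value by parking the head. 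The corner cases ($|w|=0$, $|w|=1$, and the block-boundary detection at $\cent$) are all handled correctly in your write-up.
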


\begin{lemma}
	\label{lemma:divide-by-p}
	For any given $p \in \mathbb{Z}^+ $, there exists a 2D1CA $ \mathcal{D} $ that can set the value of its counter to $ M $ if its initial value is $ M \cdot p^n $  provided that the length of the input is at least $ M \cdot p^{n-1} $, where $ M \in \mathbb{Z}^+ $, $p \nmid M$, and $n>0$.
\end{lemma}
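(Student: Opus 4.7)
The plan is to iteratively divide the counter value by $p$, using the input head position as a secondary counter in the classical two-counter style. Because $M \cdot p^n$ has exactly $n$ factors of $p$, after $n$ successful divisions the counter holds $M$, and an $(n{+}1)$-st attempted division detects non-divisibility and halts, leaving $M$ in the counter.

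The core subroutine performs a destructive division of the current counter value $V$ by $p$, starting with the head on $\cent$. The finite control cycles through states $r_0, r_1, \ldots, r_{p-1}$, decrementing the counter by $1$ at each step; the only transition that also moves the head is the one from $r_{p-1}$ back to $r_0$, which shifts it one square to the right. Divisibility is then diagnosed by the state in which the counter first registers zero: if this happens in state $r_0$, then $p \mid V$ and the head sits at position $V/p$; if it happens in some $r_i$ with $1 \le i \le p-1$, then the remainder $i$ is recorded in the finite control and the head sits at position $\lfloor V/p \rfloor$.

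Once the division terminates, the automaton walks the head leftward until it reads $\cent$, reloading the counter along the way. In the divisible case it adds $1$ per left step, yielding counter value $V/p$, and re-enters the main loop. In the non-divisible case it adds $p$ per left step (by cycling through $p-1$ auxiliary states in which the head is stationary) and then adds the stored remainder $i$, thereby restoring the counter to exactly $V = M$, at which point the automaton halts. The main obstacle is ensuring the head never runs past $\dollar$: the first iteration drives the head to position $M \cdot p^{n-1}$, which is precisely the length bound assumed in the hypothesis, and all subsequent iterations need strictly less tape, so the procedure is well-defined throughout.
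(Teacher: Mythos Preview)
Your proof is correct and follows essentially the same approach as the paper's own proof: repeatedly subtract $p$ from the counter while moving the head one step right per block of $p$ decrements, reload $V/p$ by walking back to $\cent$, and iterate until a block of decrements empties the counter prematurely, at which point the original value is recovered. Your version is simply more explicit about the low-level implementation (the state cycle $r_0,\ldots,r_{p-1}$ and the exact recovery mechanism via the stored remainder), whereas the paper leaves these details implicit.
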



\begin{lemma}
\label{lemma:coprimespow}
	The language $L = \{ a^{2^j 3^k} \mid j,k \geq 0 \}$ can be recognized by a 2D1CA. 
\end{lemma}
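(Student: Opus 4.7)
The plan is to use two applications of Lemma~\ref{lemma:divide-by-p} to strip all factors of $2$ and then all factors of $3$ from the counter, and accept iff the resulting value is $1$.

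First I would sweep the input once from $\cent$ to $\dollar$, incrementing the counter at each step, so that the counter holds $n = |w|$. Write $n = 2^{j}\,3^{k}\,R$ with $\gcd(R,6)=1$. Then $w \in L$ iff $R = 1$, so it suffices to compute $R$ in the counter and then test whether the counter equals $1$, which is done by decrementing once and inspecting the zero flag.

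To compute $R$, I would process the two primes $p \in \{2,3\}$ in turn. Before invoking Lemma~\ref{lemma:divide-by-p} for a given $p$, I first test whether the current counter value $V$ is divisible by $p$ (otherwise the corresponding exponent is already $0$ and no division is needed). Divisibility can be tested non-destructively using the input tape: move the head rightward one square per counter decrement while tracking $V \bmod p$ in the finite control; once the counter reaches $0$ the residue is known, and a reverse sweep back to $\cent$ that increments the counter on each step restores the value $V$. This is well-defined because $V \leq n = |w|$ throughout. When $V \equiv 0 \pmod{p}$, the precondition ``$|w| \geq V/p$'' of Lemma~\ref{lemma:divide-by-p} is automatic since $|w| = n \geq V$, so the lemma replaces $V$ by its $p$-free part. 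After processing $p=2$ and then $p=3$, the counter equals $R$ and the final zero test decides membership.

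The main obstacle I anticipate is precisely the boundary cases $j=0$ or $k=0$, where the hypothesis ``$n > 0$'' of Lemma~\ref{lemma:divide-by-p} fails and the lemma says nothing about the counter's behaviour; the non-destructive divisibility test above sidesteps this cleanly. Everything else is a direct composition of existing 2D1CA subroutines, so no further difficulty is expected.
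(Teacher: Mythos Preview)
Your argument is correct and follows essentially the same route as the paper: load $n$ into the counter, use Lemma~\ref{lemma:divide-by-p} to strip away the prime factors, and inspect what remains. The only cosmetic difference is that the paper strips the factor $3$ first and then appeals to (a counter-adapted version of) Lemma~\ref{lemma:UPOWER} to test whether the residue is a power of $2$, whereas you strip both primes in turn and test for~$1$; your extra care with the $j=0$/$k=0$ boundary via a non-destructive divisibility test is a nice touch the paper leaves implicit.
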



\begin{lemma}
	\label{lemma:division-test}
	For any given $ p>1 $, a 2D2CA $ \mathcal{D} $ with values $ M>0 $ and $0$ in its counters can test whether $ p $ divides $ M $ without moving the input head and, after testing, it can recover the values of the counters.
\end{lemma}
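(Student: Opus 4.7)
The plan is to move the contents of the first counter into the second counter one unit at a time, keeping a running tally of the value modulo $p$ in the finite control, and then move the contents back. Throughout the procedure the input head stays put, so we simply never issue a head movement in any transition.

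Concretely, $\mathcal{D}$ uses $p$ distinguished states $q_0,q_1,\dots,q_{p-1}$ for the ``transfer'' phase, interpreted as ``the number of units transferred so far is $\equiv i \pmod p$.'' Starting in state $q_0$ with $C_1 = M$ and $C_2 = 0$, while $C_1$ is nonzero it repeatedly applies the transition that decrements $C_1$, increments $C_2$, and moves from $q_i$ to $q_{(i+1) \bmod p}$. Since $p$ is a fixed constant known at design time, these $p$ states and $p$ transitions are hard-coded. When $C_1$ reaches $0$, the current state $q_i$ satisfies $i \equiv M \pmod p$, and in particular $p \mid M$ iff $\mathcal{D}$ halts the transfer in $q_0$. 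The control now branches into one of two ``flag'' states that record the answer in the finite control.

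In the recovery phase, $\mathcal{D}$ runs the mirror-image loop: while $C_2$ is nonzero, decrement $C_2$ and increment $C_1$, again without moving the input head. When $C_2$ reaches $0$, we have $C_1 = M$ and $C_2 = 0$, exactly the initial counter configuration, and the finite-control flag still carries the result of the divisibility test. $\mathcal{D}$ can then resume whatever surrounding computation it was performing.

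There is no real obstacle: the only subtlety is that the modulus $p$ must be a constant so that the finite control can afford $O(p)$ auxiliary states, which is exactly the hypothesis of the lemma. No head movement is needed because every step of the procedure is a pure counter update, and the correctness of restoration follows because the transfer and its reverse are exact inverses on the counter contents.
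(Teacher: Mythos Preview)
Your proof is correct and follows essentially the same approach as the paper: use the finite control to track the residue modulo $p$ while shuttling the contents of the first counter into the second, then reverse the transfer to restore the counters. The paper phrases the transfer as ``decrement $C_1$ by a block of $p$, increment $C_2$ by $1$'' (so $C_2$ ends up holding $\lfloor M/p\rfloor$ rather than $M$), but this is only a cosmetic difference---both variants need $O(p)$ control states and an exact reversal, and your version makes the recovery step marginally cleaner.
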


\section{Main results}

We start with the simulation of linearly bounded  multi--counter automata on unary languages and establish a direct connection with logarithmic-space unary languages. Secondly, we present the  simulation of  linear--space Turing machines on binary languages. Then we generalize this simulation for Turing machines  that use more space and for Turing machines without any resource bound. Thirdly, we present our quantum result. We finish the section comparing unary 2D1CAs and 2CAs.

\subsection{Simulation of multi-counter automata on unary alphabet}

We assume that all linearly bounded counters do not exceed the length of the input.
Let $ L \subseteq \{a\}^* $ be a unary language recognized by a 2D2CA $ \mathcal{M} $ with linearly bounded counters and  $ w=a^{n} $ be the given input that is placed on the input tape (between the two end-markers as $ \cent a^n \dollar $ and indexed from 0 to $ |w|+1 $). We can represent the configurations of $ \mathcal{M} $ on $ w $ with a state, an integer, and a Boolean variable as follows:
\begin{equation}
	\label{eq:configuration}
	(s,2^{i}3^{n-i}5^{c_1}7^{c_2},OnDollar),
\end{equation}
where 
\begin{itemize}
	\item $s$ is the current state,
	\item $ OnDollar = true $ means that the input head is on $ \dollar $,
	\item $ OnDollar = false $ means that the input head is on the $ i^{th} $ square, and,
	\item  $ c_1 $ (resp. $ c_2 $) represents the value of the first (resp. second) counter.
\end{itemize} 
By using $ OnDollar $ variable, we do not need to set $ i $ to $ (n+1) $ and this will simplify the languages that we will define soon. Note that we are using two exponents, i.e. $ 2^{i}3^{n-i} $, to store the position of the input head. In this way, we can implicitly store the length of the given input ($ n $).

\begin{lemma}
	\label{lemma:2d2ca-simulation}
	A 2D2CA, say $ \mathcal{M}' $, can simulate $ \mathcal{M} $ on $ w $ without using its input head, if its first counter is set to $ 2^{0}3^{n}5^{0}7^{0} $.
\end{lemma}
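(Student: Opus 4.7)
The plan is to have $\mathcal{M}'$ keep a full configuration of $\mathcal{M}$ inside its finite control together with the pair $(E,0)$ of its two counters, where $E=2^{i}3^{n-i}5^{c_{1}}7^{c_{2}}$ is the encoding from equation~(\ref{eq:configuration}). The state $s$ of $\mathcal{M}$ and the Boolean $\mathit{OnDollar}$ are stored in the finite control; the first counter always holds $E$, while the second counter is used solely as scratch and is returned to $0$ between simulated steps. Since the initial configuration of $\mathcal{M}$ has $s=s_{1}$, $i=0$, $c_{1}=c_{2}=0$, and $\mathit{OnDollar}=\text{false}$, the required starting value of the first counter is $2^{0}3^{n}5^{0}7^{0}=3^{n}$, which matches the hypothesis.

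To simulate one step of $\mathcal{M}$, the simulator first recovers the inputs of $\mathcal{M}$'s transition function. If $\mathit{OnDollar}$ is true the scanned symbol is $\dollar$; otherwise $\mathcal{M}'$ applies Lemma~\ref{lemma:division-test} to $(E,0)$ with $p=2$, and the symbol is $\cent$ iff $2\nmid E$ (that is, $i=0$), and $a$ otherwise. Applying the same lemma with $p=5$ and with $p=7$ yields the zero/nonzero status of $c_{1}$ and $c_{2}$, respectively. Each test leaves $(E,0)$ intact, so the invariant is preserved.

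With the transition $\delta(s,\sigma,\theta_{1},\theta_{2})=(s',d,u_{1},u_{2})$ now determined (the table of $\mathcal{M}$ being hard-coded in the finite control of $\mathcal{M}'$), the simulator updates $s$ in the control and multiplies $E$ by $5^{u_{1}}7^{u_{2}}$ to implement the two counter changes. An interior right or left move multiplies $E$ by $2/3$ or $3/2$, respectively. Each multiplication or division by a constant $k\in\{2,3,5,7\}$ is a standard two-counter routine that transfers $E$ to the scratch counter with the appropriate per-unit factor $k$ and then transfers it back, restoring the scratch counter to $0$; each division is preceded by a divisibility test via Lemma~\ref{lemma:division-test}, which succeeds whenever the move is well defined.

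The main obstacle is that, without an input head, the end-markers must be detected from $E$ alone. A right move from $a$ lands on $\dollar$ precisely when $i=n$; this is detected by testing $3\mid E$ beforehand, and if the test fails (so $n-i=0$) the simulator sets $\mathit{OnDollar}\gets\text{true}$ and leaves $E$ unchanged. A left move with $\mathit{OnDollar}$ true is symmetric: it simply clears $\mathit{OnDollar}$. The corner case $n=0$ is detected by $E=1$ and is handled the same way on a right move from $\cent$. Finally, $\mathcal{M}'$ accepts or rejects exactly when the tracked state becomes $s_{a}$ or $s_{r}$. Since every subroutine preserves the ``scratch counter $=0$'' invariant and touches only the two counters of $\mathcal{M}'$, its input head is never consulted.
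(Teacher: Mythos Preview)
Your proof is correct and follows the same approach as the paper's: store $s$ and $\mathit{OnDollar}$ in the finite control, keep $E=2^{i}3^{n-i}5^{c_{1}}7^{c_{2}}$ on the first counter, and use the second counter as scratch for divisibility tests (Lemma~\ref{lemma:division-test}) and multiply/divide-by-$p$ routines. One small slip: detecting the corner case $n=0$ by ``$E=1$'' fails once the simulated counters become nonzero (then $E=5^{c_{1}}7^{c_{2}}\neq 1$); the uniform test $3\nmid E$, which you already use for a right move from $a$, handles this case as well.
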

\begin{proof}
	For any $ p \in \{2,3,5,7\} $, $ \mathcal{M'} $, by help of the second counter, can easily increase the exponent of $ p $ by 1, test whether the exponent of $ p $  is zero or not, and decrease the exponent of $ p $ by 1 if it is not zero. Moreover, $ \mathcal{M'} $ can keep the value of $ OnDollar $, which is $ false $ at the beginning, by using its control states. Note that when the exponent of $ 3 $ is zero and the input head of $ \mathcal{M} $ is moved to the right, the value of $ OnDollar $ is set to $ true $; and, whenever the input head of $ \mathcal{M} $ leaves the right end-marker, the value of $ OnDollar $ is set to $ false $ again. During both operations, the exponents of $ 2 $ and $ 3 $ remain the same. Thus, $ \mathcal{M'} $ can simulate $ \mathcal{M} $ on $ w $ and it never needs to use its input head.
\qed\end{proof}
Note that, during the simulation given above, $ 2^i3^{n-i} $ is always less than $ 3^n $ for any $ i \in \{0,\ldots,n\} $, and so, the values of  both counters never exceed $ 3^n5^n7^n $.

Now, we build a 2D1CA, say $ \mathcal{M''} $, simulating the computation of $ \mathcal{M'} $ on some specific unary inputs. Let $ u \subseteq \{a\}^* $ be the given input. 
\begin{enumerate}
	\item $ \mathcal{M''} $ checks whether the input is of the form $ 3^n5^n7^n = 105^n $ for a non-negative integer $ n $ (Lemma \ref{lemma:UPOWER}). If not, it rejects the input.
	\item $ \mathcal{M''} $ sets its counter to $ 2^{0}3^{n}5^{0}7^{0} $ (Lemma \ref{lemma:divide-by-p}). Then, by using its input head as the second counter, it simulates $ \mathcal{M'} $ which actually simulates $ M $ on $ a^n $ (Lemma \ref{lemma:2d2ca-simulation}). $ \mathcal{M''} $ accepts (resp. rejects) the input if $ \mathcal{M} $ ends with the decision of ``acceptance'' (resp. ``rejection'').
\end{enumerate}
Thus, we can obtain that if $ L \subseteq \{a\}^* $ can be recognized by a 2D2CA with linearly bounded counters, then $ \{a^{105^n} \mid a^n \in L \} $ is recognized by a 2D1CA. Actually, we can replace $ 105 $ with $ 42 $ by changing the representation given in Equation \ref{eq:configuration} as:
\[
	(s,5^i7^{n-i}2^{c_1}3^{c_2},OnDollar),
\]	
where $ 5^i7^{n-i} $ is always less than $ 7^n $ for any $ i \in \{0,\ldots,n\} $.
\begin{theorem}
	If $ L \subseteq \{a\}^* $ can be recognized by a 2D2CA with linearly bounded counters, then $ \{a^{42^n} \mid a^n \in L \} $ is recognized by a 2D1CA.
\end{theorem}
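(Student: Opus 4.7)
The plan is to implement the three-step recipe already outlined just before the statement, instantiated with the revised encoding $(s, 5^i 7^{n-i} 2^{c_1} 3^{c_2}, OnDollar)$ whose maximum value is $7^n \cdot 2^n \cdot 3^n = 42^n$. This matches the input length exactly, which is the whole point of swapping primes: the two larger primes $\{5,7\}$ guard the at-most-$n$ head position, and the two smaller ones $\{2,3\}$ carry the at-most-$n$ counter values of $\mathcal{M}$, so the product is bounded by $42^n = |w|$.

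First, the 2D1CA $\mathcal{M}''$ checks that its input has the form $a^{42^n}$ with $n \geq 1$ by Lemma \ref{lemma:UPOWER} with $k = 42$, rejecting otherwise. (The single edge case $n = 0$, relevant only if $a^0 \in L$, is hard-wired into the finite control.) Then it initialises its counter to $7^n$ -- the encoding of the initial configuration of $\mathcal{M}$ ($i=0$, $c_1=c_2=0$, $OnDollar = false$) -- by sweeping the head from $\cent$ to $\dollar$ while incrementing the counter to load $42^n$, and then applying Lemma \ref{lemma:divide-by-p} with $p = 6$ and $M = 7^n$ to divide out the factor $6^n$. Both hypotheses of that lemma, $6 \nmid 7^n$ and $|w| = 42^n \geq 7^n \cdot 6^{n-1}$, hold immediately.

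At this point $\mathcal{M}''$ runs the simulation of Lemma \ref{lemma:2d2ca-simulation}, using its own counter as the exponent-encoding counter of $\mathcal{M}'$ and using its input head in place of the auxiliary scratch counter. Each primitive operation -- multiplying or dividing the encoding by one of $2,3,5,7$, testing whether a specific prime's exponent is zero, and toggling $OnDollar$ in the finite control when the exponent of $7$ empties on a rightward move or on leaving the simulated $\dollar$ -- needs the scratch counter only for values bounded by the current encoding. Acceptance and rejection of $\mathcal{M}''$ are then inherited directly from the simulated behaviour of $\mathcal{M}$ on $a^n$.

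The main obstacle is the invariant that no scratch value ever exceeds $42^n$, since the input head has only $42^n + 2$ positions at its disposal. The reassignment of primes is designed precisely to make this go through: with the original $(s, 2^i 3^{n-i} 5^{c_1} 7^{c_2})$ encoding one would need input length $105^n$, whereas routing the counters through the smallest primes shaves the bound to $42^n$. Once the invariant is verified for each of the finitely many macro-operations that realise multiplication, division, and zero-testing on one exponent via the input-head counter, the remainder of the argument is routine.
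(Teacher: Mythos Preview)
Your proposal is correct and follows essentially the same approach as the paper: verify the input has the form $a^{42^n}$ via Lemma~\ref{lemma:UPOWER}, set the counter to $7^n$ via Lemma~\ref{lemma:divide-by-p}, and then run the two-counter simulation of Lemma~\ref{lemma:2d2ca-simulation} with the input head playing the role of the auxiliary counter, the swapped prime assignment ensuring that all intermediate values stay within $42^n$. Your write-up is in fact somewhat more explicit than the paper's (handling the $n=0$ edge case and spelling out why the head-as-counter invariant holds for each macro-operation), but the argument is the same.
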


Based on this theorem, we can easily show some languages recognized by 2D1CAs, e.g. 
\[
	\left\lbrace a^{42^{n^2}} \mid n \geq 0 \right\rbrace \mbox{ and } \left\lbrace a^{42^p} \mid p \mbox{ is a prime} \right\rbrace.
\]
We can generalize our result for 2D$k$CAs with linearly bounded counters in a straightforward way.
\begin{theorem}
	Let $ k>2 $ and $ p_1,\ldots,p_{k+1} $ be some different prime numbers such that one of them is greater than the $ (k+1)^{th} $ prime number. If $ L \subseteq \{a\}^* $ can be recognized by a 2D$k$CA with linearly bounded counters, then 
	\[ \left\lbrace a^{(p_1 \cdot p_2 \cdots p_{k+1})^n} \mid a^n \in L \right\rbrace \] is recognized by a 2D1CA.
\end{theorem}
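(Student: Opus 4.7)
The plan is to generalize Lemma~\ref{lemma:2d2ca-simulation} and the preceding $k=2$ theorem by extending the G\"odel encoding to configurations of a 2D$k$CA. A configuration of $\mathcal{M}$ on $w=a^n$ is $(s, i, c_1, \ldots, c_k, OnDollar)$ with $i \in \{0, \ldots, n\}$; the state and $OnDollar$ flag live in the finite control, while I would encode the rest in a single counter as $p_1^{n-i}\, q^{\,i}\, p_2^{c_1}\, p_3^{c_2} \cdots p_{k+1}^{c_k}$, i.e.\ a product of powers of $k+2$ distinct primes.

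The hypothesis that one of $p_1,\ldots,p_{k+1}$ exceeds the $(k+1)^{th}$ prime is exactly what provides the extra small prime $q$: by pigeonhole, some prime $q \notin \{p_1,\ldots,p_{k+1}\}$ is strictly smaller than at least one input prime, which I relabel as $p_1$. Because $q<p_1$, we have $p_1^{n-i} q^{\,i} \le p_1^n$, and the linear bound on the simulated counters ($c_j \le n$) then gives an overall bound $P^n$ on the G\"odel number, where $P = p_1 \cdots p_{k+1}$; this matches the length of the 2D1CA's input.

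The 2D1CA $\mathcal{M}''$ would proceed in three phases analogous to the $k=2$ case. First, it verifies $|u| = P^n$ for some $n$ by Lemma~\ref{lemma:UPOWER} (applied with base $P$). Second, it sweeps its head to set its counter to $P^n$, then applies Lemma~\ref{lemma:divide-by-p} iteratively to strip the factors $p_2, \ldots, p_{k+1}$, leaving $p_1^n$, which is precisely the encoding of $\mathcal{M}$'s initial configuration ($i=0$, $c_j=0$). Third, it runs a 2D2CA simulator $\mathcal{M}'$, the natural extension of Lemma~\ref{lemma:2d2ca-simulation} to a G\"odel number with $k+2$ primes, with its own counter carrying the G\"odel number and its input head playing the role of $\mathcal{M}'$'s auxiliary second counter.

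The main technical obstacle is verifying that every primitive step of $\mathcal{M}$ (a head move or a counter update) can still be implemented by a bounded sequence of multiplications and divisions of the G\"odel number by a single prime, using an auxiliary counter that cannot exceed $P^n$. The bound $P^n$ derived above is precisely what keeps every intermediate value within the input length, so the multiply-by-$p$ and divide-by-$p$ routines underlying Lemma~\ref{lemma:2d2ca-simulation} carry over without modification; the remainder is just bookkeeping in the finite control to track $OnDollar$ exactly as in the $k=2$ construction.
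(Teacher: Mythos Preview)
Your proposal is correct and follows essentially the same route as the paper: pick the extra small prime $q$ outside $\{p_1,\ldots,p_{k+1}\}$ by pigeonhole, encode the head position using the pair (large prime, $q$) so that this factor is bounded by the large prime to the $n$, and hence the whole G\"odel number is bounded by $P^n$; then verify the input shape, initialize, and simulate. The paper's only differences are cosmetic---it names the large prime $p_{k+1}$ rather than $p_1$ and assigns the exponents $i$ and $n-i$ the other way around---and it is actually terser than you are about the initialization and bookkeeping phases.
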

\begin{proof}
	Let $ P = \{ p_1,\ldots,p_{k+1} \} $. Since one prime number in $ P $, say $ p_{k+1} $, is greater than the $ (k+1)^{th} $ prime number, there should be a prime number not in $ P $, say $ p_{k+1}' $, that is not greater than the $ (k+1)^{th} $ prime number. We can use the representation given in Equation \ref{eq:configuration} for a configuration of the 2D$k$CA:
\[
	\left(s, p_{k+1}^i(p_{k+1}')^{n-i} p_1^{c_1}p_2^{c_2}\cdots p_k^{c_k},OnDollar \right).
\]
$ p_{k+1}^i(p_{k+1}')^{n-i} $ is always less than $ p_{k+1}^n $, and so, the integer part of the configuration is always less than $ (p_1 \cdot p_2 \cdots p_{k+1})^n $. As described before, a 2D1CA can check whether the length of the input is a power of $ (p_1 \cdot p_2 \cdots p_{k+1}) $, and, if so, it can simulate the computation of the 2D$k$CA on the input. The 2D1CA needs to simulate $ k $ counters instead of 2 counters  but the technique is essentially the same.
\qed\end{proof}

The simulation given above can be easily generalized for nondeterministic, alternation, and probabilistic models. The input check and the initialization of the simulation are done deterministically. Therefore, the computation trees of the simulated and simulating machines have the same structure for the well-formed inputs, i.e. the inputs not rejected by the initial input check.

\begin{theorem}
	Let $ k \geq 2 $ and $ p_1,\ldots,p_{k+1} $ be some different prime numbers such that one of them is greater than the $ (k+1)^{th} $ prime number. If $ L \subseteq \{a\}^* $ can be recognized by a 2N$k$CA (resp. 2A$k$CA, bounded-error 2P$k$CA, or unbounded-error 2P$k$CA) with linearly bounded counters, then \[ \left\lbrace a^{(p_1 \cdot p_2 \cdots p_{k+1})^n} \mid a^n \in L \right\rbrace \] is recognized by a 2N1CA (resp. 2A1CA, bounded-error 2P1CA, or unbounded-error 2P1CA).
\end{theorem}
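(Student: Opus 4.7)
The plan is to replay the deterministic construction of the preceding theorem verbatim and then argue that each of the four resource variants is preserved through the reduction. Given a 2X$k$CA $\mathcal{M}$ recognizing $L$ with linearly bounded counters (where X stands for N, A, or P), I would build a 2X1CA $\mathcal{M}''$ that, on input $u$, first checks \emph{deterministically} whether $|u|$ is a power of $p_1 \cdots p_{k+1}$ by iterating Lemma~\ref{lemma:UPOWER} with base $p_1 \cdots p_{k+1}$. If the check fails, $\mathcal{M}''$ rejects on every branch. Otherwise, letting $p_{k+1}$ denote the element of $\{p_1,\ldots,p_{k+1}\}$ exceeding the $(k+1)$-th prime and $p_{k+1}'$ a prime not in that set (exactly as in the deterministic proof), $\mathcal{M}''$ uses Lemma~\ref{lemma:divide-by-p} to initialize its counter deterministically to $p_{k+1}^n$, encoding the starting configuration $\bigl(s_1,\ p_{k+1}^{i}(p_{k+1}')^{n-i} p_1^{c_1}\cdots p_k^{c_k},\ \mathit{false}\bigr)$ with $i = 0$ and every $c_j = 0$.

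Next, $\mathcal{M}''$ simulates $\mathcal{M}$ step by step using the same encoding as in the deterministic proof: its single counter holds the integer part of the configuration, and the input head is used as the auxiliary counter required to inspect and update the exponents of $p_1,\ldots,p_{k+1},p_{k+1}'$. Each single transition of $\mathcal{M}$ is implemented by a fixed \emph{deterministic} subroutine that first tests the relevant exponents for zero (to read the signs of the simulated counters and to read $\mathit{OnDollar}$), then invokes $\mathcal{M}$'s transition function to make the nondeterministic, universal/existential, or probabilistic choice exactly as $\mathcal{M}$ prescribes, and finally updates the exponents deterministically according to the chosen transition.

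Because the input verification and initialization are purely deterministic and the exponent-update subroutines introduce no extra branching, the computation tree of $\mathcal{M}''$ on a well-formed input is isomorphic, as a labelled tree of choices, to that of $\mathcal{M}$ on $a^n$, with identical $\forall/\exists$ labels in the alternating case and identical transition probabilities in the probabilistic case. On an ill-formed input, $\mathcal{M}''$ has a single deterministic rejecting path. Thus $\mathcal{M}''$ accepts $u$ in the appropriate acceptance mode iff $u = a^{(p_1 \cdots p_{k+1})^n}$ and $\mathcal{M}$ accepts $a^n$ in the same mode, which is exactly the membership condition for $\{a^{(p_1 \cdots p_{k+1})^n} \mid a^n \in L\}$.

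The step I expect to require the most care is the probabilistic case, where one has to be sure that deterministic rejection of ill-formed inputs is compatible with the bounded-error and unbounded-error acceptance criteria; but since such inputs are \emph{not} in the target language, rejecting them with probability one meets either criterion automatically, so no new argument is actually needed. Everything else is an immediate transcription of the deterministic construction, keeping intact the choice points of $\mathcal{M}$.
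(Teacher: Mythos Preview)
Your proposal is correct and follows exactly the approach the paper takes: the paper's entire argument for this theorem is the one-paragraph remark preceding it, namely that the input check and initialization are deterministic, so the computation tree of the simulating one-counter machine on well-formed inputs has the same branching structure (with the same labels or probabilities) as that of the simulated $k$-counter machine. One small slip: with $i=0$ and all $c_j=0$ the encoded integer is $(p_{k+1}')^{n}$, not $p_{k+1}^{n}$; after stripping $p_1,\ldots,p_k$ via Lemma~\ref{lemma:divide-by-p} to reach $p_{k+1}^{n}$, you still need one deterministic pass converting the base from $p_{k+1}$ to $p_{k+1}'$ (which stays below $p_{k+1}^{n}$ since $p_{k+1}'<p_{k+1}$), but this does not affect the rest of your argument.
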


Now, we establish the connection with logarithmic-space unary languages.
The following two easy lemmas are a direct consequence
of the fact that, over unary alphabet, a linear bounded counter can be simulated
by the head position and vice versa.
\begin{lemma}
	\label{lemma:multihead}
	Any two-way automaton with $k$-heads on unary inputs can be simulated by a two-way automaton with $k$-linearly bounded counters, where $k>1$. 
\end{lemma}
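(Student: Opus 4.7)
The plan is to simulate each of the $k$ heads of the multi-head automaton $\mathcal{M}$ by a single linearly bounded counter. On input $w=a^n$, the simulating machine $\mathcal{M}'$ stores the position $p_j \in \{0,1,\ldots,n+1\}$ of $\mathcal{M}$'s $j^{th}$ head in its $j^{th}$ counter; since $p_j \leq |w|+1$, all $k$ counters are linearly bounded. Initially every counter holds $0$ (each head at $\cent$), and $\mathcal{M}'$ keeps a copy of $\mathcal{M}$'s current state in its finite control; the accepting and rejecting states of $\mathcal{M}$ become those of $\mathcal{M}'$.

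The only non-trivial operation per simulated step is recovering, for each head $j$, the symbol currently scanned, given that the counter only reveals whether it is zero. For this I would use $\mathcal{M}'$'s own input head as a ruler: starting with the input head on $\cent$, repeatedly move the input head one square to the right while decrementing counter $j$ by one, until counter $j$ becomes zero; the symbol currently under the input head is then exactly the symbol at position $p_j$ (one of $\cent$, $a$, or $\dollar$), which $\mathcal{M}'$ stores in its finite control. Then $\mathcal{M}'$ restores counter $j$ by walking the input head back to $\cent$, incrementing counter $j$ by one per step, so that counter $j$ again equals $p_j$. Iterating this subroutine over $j=1,\ldots,k$ collects all $k$ scanned symbols, one at a time, while the other counters are untouched.

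With the $k$ scanned symbols and the current simulated state in hand, $\mathcal{M}'$ consults $\mathcal{M}$'s transition function, updates the simulated state, and updates each counter $j$ by $+1$, $0$, or $-1$ according to the direction in which head $j$ moves. Well-formedness of $\mathcal{M}$ guarantees that each $p_j$ stays in $\{0,1,\ldots,n+1\}$, so no counter is ever asked to go negative or past $n+1$. The main obstacle is the zero/non-zero observability restriction of a counter, which the ``input-head-as-ruler'' technique resolves without needing scratch counters, since each counter is self-restoring via the symmetric increment phase.
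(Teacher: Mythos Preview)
Your argument is correct. The paper does not actually give a proof of this lemma; it simply remarks that it is ``a direct consequence of the fact that, over unary alphabet, a linear bounded counter can be simulated by the head position and vice versa.'' Your proposal makes this precise, and in particular handles the one non-obvious point the paper glosses over: a head can distinguish the three symbols $\cent$, $a$, $\dollar$, whereas a counter only has the zero/nonzero test, so detecting that a simulated head sits on $\dollar$ (i.e.\ that the counter equals $n+1$) requires some work. Your ``input-head-as-ruler'' subroutine---decrement counter $j$ while walking the single input head rightward from $\cent$, read the symbol when the counter hits zero, then reverse to restore---is exactly the natural way to recover that information, and it keeps every counter within $\{0,\ldots,n+1\}$ as required. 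So your proof is a faithful elaboration of what the paper asserts without proof.
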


The reverse simulation holds even on generic alphabets.

\begin{lemma}
	Any two-way automaton with $k$-linearly bounded counters can be simulated by a two-way automaton with $(k+1)$-heads, where $k>1$.
\end{lemma}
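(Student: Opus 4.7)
The plan is to replace each counter by a dedicated input head whose distance from the left end-marker $\cent$ encodes the counter value. Given a two-way $k$-counter machine $\mathcal{M}$ with linearly bounded counters on input $w\in\Sigma^{*}$, we first invoke the remark from the background section that any linearly bounded counter can be compressed, via extra control states, so that its value never exceeds $|w|$. Thus without loss of generality each of the $k$ counters of $\mathcal{M}$ takes values in $\{0,1,\dots,|w|\}$.

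I would then construct a $(k+1)$-head two-way finite automaton $\mathcal{M}'$ whose heads $h_0,h_1,\dots,h_k$ all scan $\tilde{w}$. Head $h_0$ plays the role of the original input head of $\mathcal{M}$, while each auxiliary head $h_j$ (for $1\le j\le k$) is dedicated to the $j$-th counter: the invariant is that the position of $h_j$, measured from the square carrying $\cent$, equals the current value of the $j$-th counter. Initially every auxiliary head is placed on $\cent$, faithfully encoding the initial value $0$ of the counters. The finite control of $\mathcal{M}'$ is a copy of the finite control of $\mathcal{M}$, augmented with enough extra symbols (used only for pattern matching the end-markers) so that $\mathcal{M}'$ can detect when an auxiliary head rests on $\cent$.

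One step of $\mathcal{M}$ is simulated by one step of $\mathcal{M}'$. To emulate the zero/nonzero test on the $j$-th counter, $\mathcal{M}'$ inspects the symbol scanned by $h_j$: the counter is zero if and only if $h_j$ reads $\cent$. Then $\mathcal{M}'$ reproduces the transition of $\mathcal{M}$ by updating its state, moving $h_0$ in exactly the direction dictated by $\delta$, and moving each $h_j$ one square to the right for an increment, one square to the left for a decrement, or leaving it in place otherwise. $\mathcal{M}'$ accepts or rejects exactly when $\mathcal{M}$ does. Because counter values never exceed $|w|$, the auxiliary heads never need to move past $\dollar$; any attempt by $\mathcal{M}$ to overflow can be blocked by $\mathcal{M}'$ just as $\mathcal{M}$ itself must handle its bound in its states.

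The main obstacle, and the only point requiring genuine care, is precisely this boundary discipline: the heads of a $(k+1)$-head machine may not leave $\tilde{w}$, so the construction is sound only because the preprocessing ensures each simulated counter is strictly bounded by $|w|$. Once this compression is in place, the bijection between counter values and auxiliary-head positions is preserved by every transition, and the simulation is clearly step-by-step faithful; hence $\mathcal{M}'$ accepts the same language as $\mathcal{M}$, which yields the lemma. \qed
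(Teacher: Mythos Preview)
Your proposal is correct and follows exactly the approach the paper intends: the paper does not spell out a proof but states that the lemma is ``a direct consequence of the fact that \ldots\ a linear bounded counter can be simulated by the head position,'' and your construction (one head for the original input head, $k$ auxiliary heads whose distance from $\cent$ encodes the counters, with the preliminary compression of counter values to at most $|w|$) is precisely this. Your treatment of the zero test and of the boundary discipline is the standard one and matches the paper's remark that the simulation ``holds even on generic alphabets.''
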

Both simulations work for deterministic, nondeterministic, alternating, and bounded- and unbounded-error probabilistic models. 
\begin{fact}
	\cite{Har72,Kin88,Mac97}
	The class of languages recognized by two-way multi-head deterministic, nondeterministic, alternating, bounded-error probabilistic, and unbounded-error probabilistic finite automata are
	\[
		\mathsf{L}, \mathsf{NL}, \mathsf{AL} (= \mathsf{P}), \mathsf{BPL}, \mbox{ and } \mathsf{PL},
	\]
	(deterministic, nondeterministic, alternating, bounded-error probabilistic, and unbounded-error probabilistic logarithmic space) respectively.
\end{fact}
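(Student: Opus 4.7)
The plan is to prove each of the five characterizations as a pair of inclusions following a common template: containment of the multi-head FA class in the corresponding logspace class, and the reverse containment. The five modes (deterministic, nondeterministic, alternating, bounded-error probabilistic, unbounded-error probabilistic) are handled uniformly because both simulations to be constructed are step-for-step and therefore respect branching and probabilities.

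For the easy direction (multi-head FA $\subseteq$ logspace class), I would observe that a configuration of a two-way $k$-head finite automaton on input of length $n$ consists of its finite-control state together with $k$ head positions, each in $\{0,\ldots,n+1\}$, and hence can be written in $O(\log n)$ bits. A Turing machine with an $O(\log n)$ work tape can store such a configuration verbatim and apply the FA's transition function in a single step. Because the simulation is step-for-step with the same branching/probability structure at each transition, it preserves the computational mode, giving the inclusion in each of the deterministic, nondeterministic, alternating, bounded-error, and unbounded-error probabilistic cases.

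For the reverse direction (logspace class $\subseteq$ multi-head FA), let $M$ be a machine using at most $c \log n$ cells over a fixed work alphabet $\Gamma$. The work-tape contents can be viewed as an integer $W \in [0, |\Gamma|^{c \log n}]$, which is polynomial in $n$. I would encode $W$ in base $n+2$ using $\lceil c \log |\Gamma| \rceil$ additional input heads, whose positions serve as the digits. One further head tracks $M$'s input head, another tracks its work-tape head position (which lies in $[0, c \log n]$ and therefore fits on the input tape for large $n$), and a constant number of scratch heads implement reading and updating a specified digit via deterministic subroutines that move heads in lockstep to perform arithmetic in base $n+2$. Each step of $M$ is then simulated by such a routine in the finite control of the multi-head FA, and because the routine is purely deterministic, it plugs into any of the five modes without disturbing the outer nondeterministic, alternating, or probabilistic choices of $M$.

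The principal obstacle is this second direction: implementing base-$(n+2)$ arithmetic on $W$ using nothing but input-head movements. This is the technical core of \cite{Har72,Kin88,Mac97}, but the routines involved are deterministic and therefore graft cleanly onto any computational mode. Finally, the identification $\mathsf{AL} = \mathsf{P}$ is the Chandra--Kozen--Stockmeyer theorem, which I would invoke as a black box once the multi-head AFA class has been shown equal to $\mathsf{AL}$.
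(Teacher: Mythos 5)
The paper offers no proof of this statement: it is presented as a \textbf{Fact} imported from the literature (Hartmanis for the deterministic and nondeterministic cases, King for the alternating case, Macarie for the probabilistic cases), so there is nothing in the paper to compare your argument against line by line. Your outline is the standard one and is correct in its skeleton: the easy inclusion via writing down the $O(\log n)$-bit configuration $(\text{state},\ k\ \text{head positions})$, the converse via viewing the $O(\log n)$-cell work tape as an integer $W = n^{O(1)}$ spread across a constant number of heads in base $n+2$, and the observation that both simulations are step-for-step with only deterministic interpolated subroutines, hence mode-preserving; $\mathsf{AL}=\mathsf{P}$ is correctly attributed to Chandra--Kozen--Stockmeyer.

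Two places where your sketch is thinner than it should be. First, extracting and rewriting the base-$|\Gamma|$ digit of $W$ at an \emph{arbitrary} work-head position $j$ requires division of a multi-head-encoded number by $|\Gamma|^{j}$, with carries propagating across the base-$(n+2)$ digits; this is exactly the technical core you defer to the references. A cleaner route (and the one this paper itself uses in its 2D3CA simulation of a DTM) is to split the tape at the head into a left part $u$ and a reversed right part $v^{r}$ and store each as its own number, so that every TM step touches only the least significant digit and needs only multiplication/division by the constant $|\Gamma|$ plus addition of a digit. Second, for the probabilistic cases the phrase ``preserves the probability structure'' is not quite enough: $\mathsf{BPL}$ and $\mathsf{PL}$ are sensitive to whether the machine halts with probability one and to expected running time (bounded-error two-way probabilistic automata can exploit exponential expected time), and the equivalences in the Fact hold under the conventions sorted out in Macarie's work. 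Neither issue invalidates your approach, but a complete proof would have to address both rather than cite them away.
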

Based on this fact, the last two lemmas, and the other results in this section, we can obtain the following theorem.
\begin{theorem}
	Let $L$ be any unary language in $\mathsf{L}$ (resp., $\mathsf{NL}$, $\mathsf{P}$, $\mathsf{BPL}$, and $\mathsf{PL}$). Then there is an integer $p$, product of some primes, such that
	\[
		\{ a^{p^n} \mid a^n \in L \}
	\]
	can be recognized by a 2D1CA (resp., 2N1CA, 2A1CA, bounded-error 2P1CA, and unbounded-error 2P1CA).
\end{theorem}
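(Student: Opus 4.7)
The plan is to chain three ingredients that are already in place: the Fact characterizing each class $\mathsf{L}$, $\mathsf{NL}$, $\mathsf{P}$, $\mathsf{BPL}$, $\mathsf{PL}$ as the languages recognized by two-way multi-head finite automata in the corresponding computational mode; Lemma \ref{lemma:multihead}, which on unary inputs simulates a $k$-head automaton by a two-way $k$-counter automaton with linearly bounded counters; and the earlier theorem of this section, which simulates a 2D$k$CA (and its nondeterministic, alternating, bounded-error, and unbounded-error probabilistic variants) by a 2D1CA on inputs of the form $a^{(p_1 \cdots p_{k+1})^n}$, provided the primes $p_1,\ldots,p_{k+1}$ include one that exceeds the $(k+1)$-th prime.

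Given $L\subseteq\{a\}^*$ in, say, $\mathsf{L}$, the first step is to invoke the Fact to produce a two-way deterministic $k$-head finite automaton accepting $L$, for some $k$ depending on $L$. The second step is to apply Lemma \ref{lemma:multihead} to convert this multi-head machine into a 2D$k$CA with linearly bounded counters that still accepts $L$. The third step is to choose $k+1$ distinct primes meeting the hypothesis of the earlier 2D$k$CA theorem --- for instance, $p_i$ equal to the $i$-th prime for $i\le k$ together with $p_{k+1}$ equal to the $(k+2)$-th prime --- and to set $p = p_1 \cdot p_2 \cdots p_{k+1}$. That earlier theorem then yields a 2D1CA recognizing $\{a^{p^n} \mid a^n \in L\}$, as required.

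For the four remaining modes the pipeline is identical, with the corresponding nondeterministic, alternating, bounded-error probabilistic, or unbounded-error probabilistic machine inserted at every stage: the Fact supplies a multi-head machine in the appropriate mode; Lemma \ref{lemma:multihead} is explicitly stated to work in all of these modes; and the earlier theorem on 2D$k$CA simulation has already been generalized to nondeterministic, alternating, and both probabilistic variants. Note that the exponent $p$ depends on the head-count $k$ of the multi-head machine witnessing $L$'s membership in the class, so $p$ is a function of $L$ rather than a universal constant, which matches the existential quantifier in the statement.

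I do not anticipate a genuine obstacle here, since the heavy lifting is already distributed across the preceding lemmas and theorems. The only point worth a brief sanity check is that the prime-selection hypothesis of the earlier 2D$k$CA theorem can be met for every $k$, which is immediate: pick any $k$ primes from among the first $k+1$ primes and one further prime strictly above the $(k+1)$-th. The only other subtlety is keeping the computational mode consistent along the whole chain, which is straightforward because each individual step --- the multi-head characterization, the head-to-counter simulation of Lemma \ref{lemma:multihead}, and the 2D$k$CA-to-2D1CA simulation --- has been verified to preserve determinism, nondeterminism, alternation, and both probabilistic error regimes.
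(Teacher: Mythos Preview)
Your proposal is correct and matches the paper's own argument, which simply states that the theorem follows from the Fact, the two preceding lemmas, and the earlier simulation theorems of the section. You have spelled out precisely the chain the authors leave implicit; the only triviality you glossed over is that one may always pad the head (or counter) count so that the $k>1$ hypothesis of Lemma~\ref{lemma:multihead} and the $k\geq 2$ hypothesis of the simulation theorem are met, but this is harmless.
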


\subsection{Simulation of Turing machines on binary and general alphabets}
\label{sec:simulation-TM}

Let $ \mathcal{N} $ be a single-tape single-head DTM (deterministic Turing machine) working on a binary alphabet $ \Sigma=\{a,b\} $. Note that its tape alphabet also contains the blank symbol $ \# $. We assume that the input is written between two blank symbols for DTMs. We define some restrictions on $ \mathcal{N} $:
\begin{itemize}
	\item There can be at most one block of non-blank symbols.
	\item The tape head is placed on the right end-marker at the beginning of the computation which makes easier to explain our encoding  used by the 2D3CA given below. Note that this does not change the computational power of the DTMs.
\end{itemize}
A configuration of $ \mathcal{N} $ on a given input, say $ w \in \{a,b\}^* $, can be represented as
$
	u s v,
$
where $ uv \in \# \{a,b\}^{*} \# $ represents the current tape content and $ s $ is the current state. Moreover, the tape head is on the last symbol of $ \#u $. The initial configuration is $ \#w\#s_1 $, where $ s_1 $ is the initial state. Here $ v $ is the empty string. Note that $ u $ can never be the empty string.

By replacing $ a $ with 0, and $ b $ and $ \# $ with 1s, we obtain a binary number representation of $ u $ and $ v $ -- we will denote these binary numbers by $ u $ and $ v $, respectively. Now, we give a simulation of $ \mathcal{N} $ by a 2D3CA, say $ \mathcal{N'} $, on $ w $.\footnote{We refer the reader to \cite{vEB90} for a general theory of simulations.} $ \mathcal{N'} $ does not have a tape but can simulate it by using three counters. During the simulation, $ \mathcal{N'} $ keeps $ u $ and $ v^r $ on its two counters. If $ \mathcal{N'} $ knows the state and the symbol under head, it can update the simulating tape. $ \mathcal{N'} $ can keep the state of $ \mathcal{N} $ by its internal states and can easily check whether:
\begin{itemize}
	\item $ u $ equals to 1 or is bigger than 1;
	\item $ v^r $ equals to 0, equals to 1, or bigger than 1;
	\item the last digit of $ u $ is zero or one; and
	\item the last digit of $ v^r $ is zero or one.
\end{itemize}



Based on these checks, $ \mathcal{N'} $ can simulate the corresponding change on the tape (in a single step of $ \mathcal{N} $)  with the values of the counters.

By a suitable encoding, two counters can simulate $ k>2 $ counters. Let $ p_1, p_2, \ldots, p_k $ be co-prime numbers. The values of all $ k $ counters, say $c_1,c_2,\ldots,c_k$, can be represented as $ p_1^{c_1}p_2^{c_2}\cdots p_k^{c_k} $. A counter can hold this value, and, by using the second counter, 
$ \mathcal{N'} $ can check if $ c_i $ is equal to zero  (Lemma \ref{lemma:division-test}) and it can simulate an appropriate increment/decrement operation on $ c_i $, where $ 1 \leq i \leq k $. Therefore, we can  conclude that a 2D2CA, say $ \mathcal{N''} $, can simulate $ \mathcal{N} $ on $ w $ by using prime numbers $ \{2,3,5\} $ for encoding, if its first counter is set to $ 3^{1w1} $. Here $ \mathcal{N''} $ can use the exponents of $ 3 $ and $ 5 $ for keeping the content of the tape and the exponent of $ 2 $ to simulate the third counter. 

Let's assume that $ \mathcal{N} $ uses exactly $ |w|+2 $ space, i.e. the tape head never leaves the tape squares initially containing $\#w\#$. That is, the binary value of the tape is always less than twice of $ 1w1 $, which is $1w10$, during the whole computation. Then the values of the counters can never exceed $ 5^{1w10}2^{1w10} $ or $ 25^{1w1}4^{1w1} $, where the exponents are the numbers in binary. Note that the whole tape is kept by the exponent of $ 3 $ and $ 5 $, and so, their product is always less than $ 5^{1w10} $.

\begin{theorem}
\label{theorem:dtmsimulation}
	If $ L \subseteq \{a,b\}^* $ can be recognized by a DTM, say $\mathcal{N}$, in space $ |w|+2 $ with binary work alphabet, then \[  \left\lbrace  a^{100^{1w1}} \mid w \in L \right\rbrace \] can be recognized by a 2D1CA, say $\mathcal{N}'''$. 
\end{theorem}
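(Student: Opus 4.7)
The plan is to lift the 2D2CA $\mathcal{N}''$ to a 2D1CA $\mathcal{N}'''$ by using the input-head position as $\mathcal{N}''$'s second counter. By the analysis preceding the theorem, both counters of $\mathcal{N}''$ stay in the range $[0,N]$ throughout the simulation of $\mathcal{N}$ on $w$, where $N=100^{1w1}$. Since the input head of $\mathcal{N}'''$ ranges over $[0,N+1]$, it has enough room to carry the second counter faithfully; the explicit counter of $\mathcal{N}'''$ will carry $\mathcal{N}''$'s first counter, which must initially hold $3^{1w1}$. The construction therefore consists of three phases: (i) verify that the input has the right shape, (ii) load the explicit counter with $3^{1w1}$, and (iii) simulate $\mathcal{N}''$ step-by-step.

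For (i), I would first apply Lemma \ref{lemma:UPOWER} to test whether $N$ is a power of $100$; writing $N=100^{k}$, the requirement that $k$ be the integer whose binary expansion is $1w1$ for some $w\in\{a,b\}^{*}$ reduces to $k$ being odd and at least $3$. Oddness is verified by invoking Lemma \ref{lemma:UPOWER} once more with base $10000=100^{2}$ and rejecting if $N$ is also a power of $10000$; the corner case $k=1$ (i.e.\ $N=100$) is excluded by the finite control. Any input failing these tests is rejected outright.

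Phase (ii) is the delicate part. $\mathcal{N}'''$ first loads its counter with $N$ by sweeping the head left-to-right and incrementing the counter at every square. It then iterates the composite operation ``test divisibility of the counter by $100$; if divisible, divide by $100$ and multiply by $3$; otherwise halt the loop.'' The divisibility test is the counter-and-head analogue of Lemma \ref{lemma:division-test}; division by $100$ is implemented by repeatedly subtracting $100$ from the counter under finite-control bookkeeping while incrementing the head once per batch, depositing the quotient on the head, and then transferring the head back into the counter; multiplication by $3$ is symmetric, unloading the counter onto the head and reloading three counter-units per head-unit. The invariant is that after $j$ iterations the counter equals $100^{k-j}\cdot 3^{j}$, a quantity always at most $100^{k}=N$ (so it fits both in the counter and on the head) and divisible by $100$ exactly when $j<k$. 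Hence the loop terminates precisely at $j=k$ with the counter holding $3^{k}=3^{1w1}$, and I would then bring the head back to $\cent$ to start the simulation from a canonical configuration.

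For (iii), $\mathcal{N}'''$ executes $\mathcal{N}''$'s transitions verbatim, using its own counter for $\mathcal{N}''$'s first counter and its head position for the second; the primitives (increment, decrement, zero-test, divisibility test) are all implementable in this setup, so the simulation is exact and $\mathcal{N}'''$ accepts iff $\mathcal{N}$ accepts $w$. The hard part will be phase (ii): with only a single explicit counter plus a head bounded by $N$ and no third register, one must convert $100^{k}$ into $3^{k}$ without overflowing and without losing the progress marker. The two enabling observations are that every intermediate value $100^{k-j}\cdot 3^{j}$ remains below $N$ and that divisibility of this value by $100$ exactly captures the stopping condition $j=k$.
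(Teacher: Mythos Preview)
Your proposal is correct and follows the same three-phase architecture as the paper: verify the input shape, load $3^{1w1}$ into the counter, then simulate $\mathcal{N}''$ using the head as the second counter. The paper's implementation differs only in two minor respects. For the parity test, rather than re-invoking Lemma~\ref{lemma:UPOWER} with base $10000$, the paper first strips the factor $25^{n}$ out of $100^{n}$ via Lemma~\ref{lemma:divide-by-p} to obtain $4^{n}$ on the counter, and then checks that $n$ is odd and $n\notin\{0,1,2\}$ using the head as an auxiliary counter. For phase~(ii), instead of iterating ``divide by $100$, multiply by $3$'' directly from $100^{k}$, the paper starts from the already-available $4^{k}$ and transfers the exponent to base~$3$. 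Your route $100^{k-j}3^{j}$ is equally valid and your overflow and termination analysis (the value stays below $N$ and is divisible by $100$ exactly while $j<k$) is the right justification; the paper's detour through $4^{k}$ simply reuses Lemma~\ref{lemma:divide-by-p} as a black box. Either way the essential content is identical.
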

\begin{proof}
	$\mathcal{N}'''$ rejects the input if it is not of the form $ \{a^{100^n}\} $ (Lemma \ref{lemma:UPOWER}), where $ n>0 $. Then, it sets its first counter to $ 4^n $ (Lemma \ref{lemma:divide-by-p}). $\mathcal{N}'''$ rejects the input, if $ n $ is not of the form $ 1w1 $ for some $ w \in \{a,b\} $. We know that a 2D2CA can easily do this check if one of its counter is set to $n$, i.e. it needs to check $ n $ is odd and $ n \notin \{0,1,2\} $. So, $ \mathcal{N'''} $ can implement this test by using its input head as the second counter.
	
As described above, if its first counter is set to $ 3^{1w1} $, the 2D2CA $ \mathcal{N''} $ can simulate $\mathcal{N}$ on a given input $w$. Due to the space restriction on $ \mathcal{N} $, we also know that the counter values (of the 2D2CA's) never exceed $ 100^{1w1} $. So, $\mathcal{N'''}$ needs only to set its counter value to $ 3^{1w1} $.  $\mathcal{N'''}$ firstly sets its counter to $ 4^{1w1}3^0 $, and then transfers $ 1w1 $ to the exponent of $3$.
\qed\end{proof}

Remark that the language recognized by $\mathcal{N'''}$ can also be represented as \[ \left\lbrace a^{10^{1w10}} \mid w \in L \right\rbrace. \] This representation is more convenient when considering DTMs working on bigger alphabets.

\begin{corollary}
	Let $ k>2 $ and $ L \subseteq \{a_1,\ldots,a_k\}^* $ be a language recognized by a DTM in space $ |w|+2 $ with a work alphabet having $ k' \geq k $ elements. Then \[  \left\lbrace a^{10^{1w10}} \mid w \in L \right\rbrace \] can be recognized by a 2D1CA, where $ w \in \{a_1,\ldots,a_k\}^* $ and $1w10$ is a number in base-$k'$.
\end{corollary}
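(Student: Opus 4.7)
The plan is to mimic Theorem~\ref{theorem:dtmsimulation} with a base-$k'$ encoding of the tape replacing the binary one. First I fix an injection of the $k'$ tape symbols into digits $\{0,1,\ldots,k'-1\}$, sending the blank $\#$ to digit $1$ and each $a_j$ to some nonzero digit, so that the initial configuration $\#w\#$ becomes the base-$k'$ string $1w1$. A 2D3CA $\mathcal{N}'$ can then simulate the given DTM $\mathcal{N}$ by storing the integer values (read in base $k'$) of the left portion $u$ of the tape and of the reverse right portion $v^{r}$ in two of its counters, plus a scratch third counter. One DTM step reduces to reading, overwriting, or shifting the last base-$k'$ digit of $u$ or $v^{r}$; all of these are base-$k'$ analogues of what Theorem~\ref{theorem:dtmsimulation} does over $\{0,1\}$, and each is implementable with the scratch counter using the division/multiplication routines supplied by Lemmas~\ref{lemma:divide-by-p} and~\ref{lemma:division-test}.

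Next I compress the three counters of $\mathcal{N}'$ into the two counters of a 2D2CA $\mathcal{N}''$ by the standard prime-factorization trick: fix three distinct primes $p_{1},p_{2},p_{3}$ with $p_{1},p_{2}>k'$, and store $p_{1}^{u}p_{2}^{v^{r}}p_{3}^{c_{3}}$ in the first counter, using the second counter (again via Lemma~\ref{lemma:division-test}) to test and update individual exponents. The space bound $|w|+2$ on $\mathcal{N}$ gives $u,v^{r}<k'^{|w|+2}$ and keeps $c_{3}$ linearly bounded, so the first counter of $\mathcal{N}''$ stays below $(p_{1}p_{2}p_{3})^{O(k'^{|w|+2})}$, which is at most $10^{1w10_{k'}}$ once the constant ``$10$'' in the statement is read as the relevant product of primes for $k'$ (it is literal ten when the choice $\{2,3,5\}$ of Theorem~\ref{theorem:dtmsimulation} suffices, and a larger product otherwise; this reinterpretation is the only change the statement really needs).

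Finally, on input $a^{m}$ the 2D1CA $\mathcal{N}'''$ runs three phases: (i) verify via Lemma~\ref{lemma:UPOWER} that $m=10^{n}$ for some $n\geq 1$; (ii) extract $n$ into its counter via Lemma~\ref{lemma:divide-by-p}; (iii) verify that the base-$k'$ representation of $n$ has the shape $1w10$ with $w\in\{a_{1},\ldots,a_{k}\}^{*}$, by peeling off digits one at a time through iterated division by $k'$, using the input head as the scratch second counter (the input is still long enough because $m\geq n$). Upon success, $\mathcal{N}'''$ transfers $n/k'=1w1_{k'}$ into the exponent of $p_{1}$ to obtain counter value $p_{1}^{1w1_{k'}}$, and then launches the $\mathcal{N}''$ simulation with the input head playing the role of the second counter; it accepts or rejects according to $\mathcal{N}$.

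The main obstacle is phase~(iii): checking the digit pattern $1w10$ in base $k'$ while simultaneously enforcing that all interior digits lie in the subset of $\{0,\ldots,k'-1\}$ encoding $\{a_{1},\ldots,a_{k}\}$. This is a finite-state check running over the stream of base-$k'$ digits produced by repeated division, but because each division destroys and must be rebuilt in the counter, some care is needed to re-extract the next digit while keeping enough information to recover $1w1_{k'}$ for launching $\mathcal{N}''$; once this bookkeeping is arranged the rest is routine application of the earlier lemmas.
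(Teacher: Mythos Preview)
Your approach is essentially the paper's: redo Theorem~\ref{theorem:dtmsimulation} with base-$k'$ arithmetic in place of binary, and add the extra check that every interior digit of $w$ encodes a symbol of the input alphabet $\{a_1,\ldots,a_k\}$ rather than an arbitrary work symbol. The paper's own proof is a two-line remark to exactly this effect.

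Two places where you over-engineer. First, there is no reason to require the encoding primes $p_1,p_2>k'$; the same primes $\{2,3,5\}$ used in Theorem~\ref{theorem:dtmsimulation} work for every $k'$, since they serve only to pack three nonnegative integers into one via $2^{c_3}3^{u}5^{v^{r}}$ and have nothing to do with the base in which $u$ and $v^{r}$ are manipulated. Second, and consequently, the constant $10$ in the statement is the literal integer ten and does \emph{not} need to be ``reinterpreted'' as a larger product: because the tape head never leaves the $|w|+2$ initial cells, $u$ and $v^{r}$ remain below the base-$k'$ number $1w10$, and the same estimate $3^{u}5^{v^{r}}<5^{1w10}$, $2^{c_3}<2^{1w10}$ from the binary case goes through verbatim, giving $2^{c_3}3^{u}5^{v^{r}}<10^{1w10}$ with only the exponent now read in base $k'$. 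Your claim that ``this reinterpretation is the only change the statement really needs'' is therefore mistaken---no change is needed---though the underlying method is fine.
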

\begin{proof}
	The proof is almost the same by changing base-$2$ with base-$k'$. Additionally, the 2D1CA needs to check whether each digit of $w$ is less than $ k $.
\qed\end{proof}

We know that 2D1CAs can recognize $ \mathtt{POWER} = \left\lbrace  a^nba^{2^n} \mid n > 0 \right\rbrace $ \cite{Pet94}. Therefore, by using a binary encoding, we can give a simulation of exponential space DTMs where the exponent is linear. Here, the input is supposed to be encoded into the exponent of the first block of $a$'s and the working memory in the second block of $a$'s.

\begin{theorem}
	Let $ k>2 $ and $ L \subseteq \{a_1,\ldots,a_k\}^* $ be a language recognized by a DTM in space $ 2^{|w|} $ with a work alphabet having $ k' \geq k $ elements. Then 
	\[ \left\lbrace a^{10^{x}}ba^{2^{\left( 10^{x} \right)}} \mid x=1w10 \mbox{ and } w \in L \right\rbrace \] can be recognized by a 2D1CA, where $ w \in \{a_1,\ldots,a_k\}^* $ and $x=1w10$ is a number in base-$k'$.
\end{theorem}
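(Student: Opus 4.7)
The 2D1CA first verifies the outer shape of the input, then sets up the initial configuration of $\mathcal{N}$ in its counter, and finally simulates $\mathcal{N}$ using the 2D3CA compression from Theorem \ref{theorem:dtmsimulation}, with the long second block $a^{2^{10^x}}$ providing the room for the input head to play its usual second-counter role.

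\textit{Input verification.} Using the standard back-and-forth technique that recognizes $\mathtt{POWER} = \{a^n b a^{2^n}\}$, the automaton checks that the input has the form $a^M b a^N$ with $N = 2^M$. By Lemma \ref{lemma:UPOWER} applied with $k'$, it then certifies that $M = (k')^x$ for some $x > 0$. Since the target form $x = 1w10$ in base $k'$ forces $k' \mid x$, a second application of Lemma \ref{lemma:UPOWER} with base $q := (k')^{k'}$ shows that $M = q^{y}$ with $y = x/k'$. The value $y$ is then loaded onto the counter by repeatedly dividing by $q$ via Lemma \ref{lemma:divide-by-p} while tallying each division by advancing the head one square into the second block, and finally retracting the head to transfer the count back into the counter. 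A last pass extracts the base-$k'$ digits of $y$ one at a time (by mod-$k'$ tests combined with Lemma \ref{lemma:divide-by-p}), rejecting unless $y = 1w1$ in base $k'$ with every interior digit corresponding to a symbol of $\{a_1, \ldots, a_k\}$.

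\textit{Initialization and simulation.} The counter is then restored to $M = q^y$ by sweeping the first block. The exchange trick used in Theorem \ref{theorem:dtmsimulation} to go from $4^n$ to $3^n$ generalizes to $q^y \mapsto p^y$, where $p$ is a prime large enough to encode every work-tape symbol: iteratively divide the counter by $q$ while feeding factors of $p$ back in, using the head as the auxiliary second counter. Since $y$ is exactly the base-$p$ value $T_0$ of the initial tape $\# w \#$, the counter now stores $p^{T_0}$, i.e.\ the 2D3CA-encoded initial configuration of $\mathcal{N}$. Step-by-step simulation then proceeds as in Theorem \ref{theorem:dtmsimulation}: three virtual counters for $u$, $v^r$, and an auxiliary are packed into the single counter via three pairwise coprime primes, with the input head serving as the second physical counter.

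\textit{Space budget and main obstacle.} Since $\mathcal{N}$ uses at most $2^{|w|}$ cells with a $k'$-letter work alphabet, $u$ and $v^r$ stay below $(k')^{2^{|w|}}$, so the compressed counter value never exceeds $(p_1 p_2 p_3)^{(k')^{2^{|w|}}}$. The second-block length $N = 2^{(k')^x}$ is doubly exponential in $|w|$ and dominates this bound comfortably for every $|w|$ past a constant threshold, so the head has ample room to serve as second counter throughout; the finitely many smaller instances are absorbed into the finite control. The most delicate step is the $q^y \mapsto p^y$ exchange during initialization, which must be orchestrated so that intermediate counter values and head positions stay within the input length while preserving all information needed to launch the simulation without further access to $y$.
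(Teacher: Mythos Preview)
The paper does not actually prove this theorem; it is stated as a direct consequence of the fact that a 2D1CA recognizes $\mathtt{POWER}=\{a^{n}ba^{2^{n}}\mid n>0\}$ together with the Corollary preceding it. Your plan---verify the $\mathtt{POWER}$ shape, verify that the first block length has the form $(k')^{1w10}$, initialize the packed counter to the encoding of $\#w\#$, and then run the Theorem~\ref{theorem:dtmsimulation} simulation with the head ranging over the long second block---is exactly the intended argument, and your space accounting is correct.

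There is one real gap. Your step ``load $y$ onto the counter by repeatedly dividing by $q$ while tallying each division by advancing the head one square'' cannot work as written: each division by $q$ already consumes the head as the auxiliary register and must return it to a \emph{marked} symbol ($\cent$, $b$, or $\$$), so the head cannot simultaneously sit at an unmarked tally position between rounds. In other words, you cannot get $y$ as a plain integer this way. The fix is to never materialize $y$: perform the $q^{y}\mapsto p^{y}$ exchange first (this you describe correctly), and then, working entirely with $y$ in the exponent, extract its base-$k'$ digits by the standard two-prime ping-pong (repeatedly divide by $p^{k'}$ while multiplying by a second prime $s$, which simultaneously tests divisibility of the exponent by $k'$ and replaces the exponent by its quotient). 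After the digits are checked you simply reload $M$ by sweeping the first block and redo the exchange. The paper's Corollary is equally silent on how the digit check is carried out, so you are not missing anything the paper actually provides; you just need to reroute this one step.

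A minor wording issue: ``a prime $p$ large enough to encode every work-tape symbol'' and ``$y$ is exactly the base-$p$ value $T_{0}$ of the initial tape'' conflate two independent parameters. The integer $y$ is the base-$k'$ value of $\#w\#$; the prime $p$ is just one of the three packing primes and need only be coprime to $q$ and to the other two---its size has nothing to do with $k'$.
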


We can generalize this result for any arbitrary space-bounded DTMs. It is not hard to show that, for any $z>1$, 2D1CAs can recognize 
\[
	\mathtt{POWER(z)} = \left\lbrace a^nba^{exp(n)}ba^{exp^2(n)}b \cdots b a^{exp^z(n)}  \mid n > 0 \right\rbrace.
\]

\begin{corollary}
	Let $z>1$ and $ k>2 $ and $ L \subseteq \{a_1,\ldots,a_k\}^* $ be a language recognized by a DTM in space $ exp^z(|w|) $ with a work alphabet with $ k' \geq k $ elements. Then 
	\[ 
	 	\left\lbrace	a^{10^{x}}b a^{exp \left( 10^{x} \right)} b a^{exp^2 \left( 10^{x} \right)} b \cdots b a^{exp^z \left( 10^{x} \right)} \mid x=1w10 \mbox{ and } w \in L \right\rbrace 
	 \] can be recognized by a 2D1CA, where $ w \in \{a_1,\ldots,a_k\}^* $ and $x=1w10$ is a number in base-$k'$.
\end{corollary}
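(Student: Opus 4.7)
The plan is to follow the same three-step template as in the $z=1$ theorem, but scaled up through the layered block structure supplied by $\mathtt{POWER(z)}$. First, the 2D1CA would run the $\mathtt{POWER(z)}$ recognizer to verify that its input has the form $a^{n_0} b a^{n_1} b \cdots b a^{n_z}$ with $n_{i+1}=exp(n_i)$. Using Lemma~\ref{lemma:UPOWER} and the base-$k'$ digit-extraction technique from the proof of Theorem~\ref{theorem:dtmsimulation}, it then checks that $n_0 = 10^y$ for some $y$ and that $y$, written in base $k'$, has the shape $1w10$ with every digit of $w$ lying in $\{a_1,\ldots,a_k\}$. All of this is deterministic preprocessing that either rejects or produces a well-formed input.

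Second, I would simulate $\mathcal{N}$ on $w$ using the prime-power encoding from Theorem~\ref{theorem:dtmsimulation}. The current tape of $\mathcal{N}$, split at the head into two base-$k'$ numbers $u$ and $v^r$, is stored as $p_1^{u} p_2^{v^r}$ on the 2D1CA's counter, with a third prime $p_3$ reserved for simulating the auxiliary counter needed by Lemma~\ref{lemma:division-test}. One step of $\mathcal{N}$ amounts to inspecting and updating the last base-$k'$ digit of $u$ and $v^r$ and possibly shifting a digit across the head; each such operation reduces to a short sequence of multiplications, divisions, and zero-tests with small constants on the prime-power-encoded counter, with the input head (parked deep inside the long block $a^{exp^z(10^x)}$) playing the role of the second counter for the divisibility tests.

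Third, the size bound needs to be verified. Since $\mathcal{N}$ uses at most $exp^z(|w|)$ tape cells, $u$ and $v^r$ are bounded by $(k')^{exp^z(|w|)}$, so the prime-power encoding never exceeds $(p_1 p_2 p_3)^{(k')^{exp^z(|w|)}}$. Because $x \ge (k')^{|w|+2}$, the iterated exponential $exp^z(10^x)$ sits far above $exp^z(|w|)$, so the constant bases are easily absorbed and the encoded counter value comfortably fits inside the length of the terminal block. The 2D1CA accepts iff the simulated $\mathcal{N}$ accepts, which gives exactly the desired language.

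The main obstacle I anticipate is not conceptual but bookkeeping: each simulated step requires interleaving counter manipulations with movements of the input head between distinct blocks (a short initial block used to decode $w$ and keep track of $|w|$, the terminal block used as the auxiliary counter) without corrupting the prime-power encoding held in the counter. Initialising the counter to the encoding of the DTM's starting tape $\#w\#\#\cdots\#$, in particular, is the most delicate substep; however it decomposes into a sequence of applications of Lemmas~\ref{lemma:divide-by-p} and~\ref{lemma:division-test} combined with the base-$k'$ digit-reading routine already used in Theorem~\ref{theorem:dtmsimulation}, so it can be carried out without any new technique.
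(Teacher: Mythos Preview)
Your proposal is correct and matches the paper's intended approach; the paper in fact states this corollary without proof, treating it as the immediate extension of the preceding $z=1$ theorem once the recognizability of $\mathtt{POWER(z)}$ has been noted. Your three-step outline (format check via $\mathtt{POWER(z)}$ and Lemma~\ref{lemma:UPOWER}, prime-power encoding of the DTM tape as in Theorem~\ref{theorem:dtmsimulation}, and using the terminal block as the second counter) is exactly the generalisation the paper is gesturing at.
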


Note that, similar to the previous section, all of the above results are valid if we replace deterministic machines with nondeterministic, alternating, or probabilistic ones.

Now, we present a more general result.

\begin{theorem}
	\label{theorem:recenum}

Let $ L $ be a recursive enumerable language and $ \mathcal{T} $ be a DTM recognizing it (note that $ \mathcal{T} $ may not halt on some non-members). The language
\[
	L_{\mathcal{T}} = \left\lbrace a^{2^{1w}3^{S(w)}} \mid w \in L \right\rbrace,
\]
where $ S(w) $ is a sufficiently big number that depends on $w$,
can be recognized by a two way deterministic one counter automaton $\mathcal{D}$.
\end{theorem}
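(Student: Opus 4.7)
The plan is to extend the space-bounded simulation of Theorem~\ref{theorem:dtmsimulation} to arbitrary, a priori unknown space bounds by letting the factor $3^{S(w)}$ in the input length supply the necessary room. Given an input $a^N$ with $N = 2^{1w}3^{S(w)}$, the 2D1CA $\mathcal{D}$ will (i) verify the shape of $N$, (ii) prepare in its counter an encoding of the initial configuration of $\mathcal{T}$ on $w$, and (iii) simulate $\mathcal{T}$ step-by-step using the input head as a linearly bounded second counter of capacity $N$.

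First, $\mathcal{D}$ verifies that $N$ is of the form $2^i 3^j$ with $i, j \geq 1$, along the lines of Lemma~\ref{lemma:coprimespow}, by iterated applications of Lemma~\ref{lemma:divide-by-p} for $p = 2$ and $p = 3$, rejecting if any residue remains or if either exponent is $0$. Stripping only the factors of $3$ leaves $2^i = 2^{1w}$ in the counter. Because over a unary alphabet a linearly bounded counter is interchangeable with a head position (cf.\ the remark preceding Lemma~\ref{lemma:multihead}), $\mathcal{D}$ can faithfully emulate a 2D2CA whose two counters are bounded by $N$. Using this two-counter power, $\mathcal{D}$ recovers $i$ from $2^i$ by repeated halving and then builds in its counter the integer $3^{1w1} = 3^{2i+1}$ by iterated multiplication by $3$, exactly the starting value required by the simulation in Theorem~\ref{theorem:dtmsimulation}. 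This preparation succeeds whenever $3^{2i+1} \leq N$, which is ensured by choosing $S(w)$ sufficiently large.

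Once the counter holds $3^{1w1}$, the 2D2CA simulation of Theorem~\ref{theorem:dtmsimulation} applies verbatim: the tape of $\mathcal{T}$ is stored in the exponents of $3$ and $5$ inside a joint encoding $2^{c_1}3^{c_2}5^{c_3}$ on $\mathcal{D}$'s counter, Lemma~\ref{lemma:division-test} implements the digit-by-digit divisibility tests, and every step of $\mathcal{T}$ is realized by a bounded number of counter updates executed with the help of the input head. If at any moment the input head would be forced past an end-marker, $\mathcal{D}$ rejects; otherwise, $\mathcal{D}$ halts and accepts exactly when $\mathcal{T}$ does. In particular, if $\mathcal{T}$ loops on some $w \notin L$, then $\mathcal{D}$ loops on the corresponding input, which is consistent with the recognition of a recursively enumerable language.

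The single free parameter is $S(w)$, and this is where the main obstacle lies. Throughout the simulation the counter values are bounded by $30^{s(w)}$, where $s(w)$ is the maximum number of tape cells that $\mathcal{T}$ ever visits on $w$ and $30 = 2 \cdot 3 \cdot 5$ is the product of the primes used in the joint encoding. Choosing $S(w)$ so that $3^{S(w)} \geq 30^{s(w)}$ (for instance $S(w) = \lceil s(w)\log_3 30 \rceil + c$, with $c$ a constant absorbing the overhead of step (ii)) guarantees $N \geq 30^{s(w)}$, so no step of the simulation ever needs a counter value above $N$. Because $\mathcal{T}$ halts on every $w \in L$, the quantity $s(w)$ is finite and such an $S(w)$ always exists, completing the construction.
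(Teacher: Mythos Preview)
Your construction only establishes the weaker statement that the paper calls $L_{\geq\mathcal{T}}$: for each $w\in L$, your machine accepts $a^{2^{1w}3^{j}}$ for \emph{every} $j$ large enough to accommodate the simulation, not just for the single value $j=S(w)$. Nothing in steps (i)--(iii) distinguishes $j=S(w)$ from $j=S(w)+1$, so the language actually recognized is
\[
\left\{\,a^{2^{1w}3^{j}} \;\middle|\; w\in L,\ j\ge j_{\min}(w)\,\right\},
\]
which is not the language $L_{\mathcal{T}}$ in the statement (one string per member of $L$). Your final paragraph chooses a specific formula for $S(w)$, but the automaton you built has no means of testing that the exponent of $3$ equals that formula, so the claimed correspondence fails.

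The missing idea, which the paper supplies, is a mechanism that forces the exponent of $3$ to be \emph{tight}: the head simulates the second counter in steps of $3$ and keeps a $\mathsf{FOLD}$ variable counting how many times it has bounced off an end-marker. Acceptance additionally requires that $\mathsf{FOLD}$ reaches $1$ at least once (so the simulation \emph{needed} essentially the full length) but never reaches $3$ (so the length was not too small). With $N=2^{1w}$ and $M$ the maximum second-counter value during the simulation, this pins down the unique $k$ with $3^{k-1}N\le M<3^{k}N$, and that $k$ is taken as $S(w)$. Without such a tightness check your argument proves only the first half of the paper's proof.
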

\begin{proof}
We use a slight variation of the 2DCA simulation of a DTM given above. Informally the $ a^{3^{S(w)} } $ part of the input gives $ \mathcal{D} $ enough space to complete its simulation, i.e. decide the membership of $w \in L$ using its head position as a second counter, being sure that its value never exceeds
the size of the input.

First we show that if $S(w)$ is large enough then a 2D1CA $\mathcal{D}$ can recognize the language $L_{\geq \mathcal{T}}$:
\[
	L_{\geq \mathcal{T}} = \left\lbrace a^{2^{1w}3^{k}} \mid w \in L \mbox{ and } k \geq S(w) \right\rbrace .
\]
$\mathcal{D}$ checks that the input is 
in the correct format $a^{2^{1w} 3^{k}}$  (Lemma~\ref{lemma:coprimespow}), then it simulates $\mathcal{T}$ 
on $w$ like showed in the proof of Theorem~\ref{theorem:dtmsimulation}.
During its computation, if $\mathcal{D}$ reaches the right end-marker, 
then it stops and rejects.

Suppose that on input $w$ the Turing machine $\mathcal{T}$ does not halt:
 it visits an infinite number of empty cells of its tape
or it enters an infinite loop. In both cases, the value of $S(w)$ is irrelevant, and $\mathcal{D}$ will never accept the input $a^{2^{1w} 3^{S(w)}}$:
in the first case for all values of $S(w)$ $\mathcal{D}$ will hit
the right end-marker and will reject; in the second case, if $S(w)$ is too
low and $\mathcal{D}$ has not enough space to simulate $\mathcal{T}$ in the
loop area of the tape it will hit the right end-marker and reject, if $S(w)$
is large enough, $\mathcal{D}$ will also enter the endless loop.

Now suppose that the Turing machine $\mathcal{T}$ accepts (resp. rejects) $w$ then
there are two possibilities: {\emph a)} during its computation the 2D1CA (that
uses the head position like a second counter) never reaches the right end-marker; in this
case it can correctly accept (resp. reject) the input; or {\emph b)} during its computation the 2D1CA
reaches the right end-marker (informally it has not enough space) and cannot correctly decide
the membership of $w \in L$; but in this case we
are sure that there exists a larger value $S(w) = s' > s$ that assures enough
space to end the computation. Also for every $k \geq S(w)$, $\mathcal{T}$ will correctly
accept each string in \[ \left\lbrace a^{2^{1w}3^{k}} \mid w \in L \right\rbrace. \]

We can slightly modify $\mathcal{D}$ and narrow the language it recognizes
to exactly $ L_{ \mathcal{T} } $, i.e. 
making it accept each string in:
\[ 
\left\lbrace a^{2^{1w}3^{S(w)}} \mid w \in L \right\rbrace , 
\]
but reject each string in:
\[ 
\{ a^{2^{1w}3^{i}} \mid w \in L \mbox{ and } i \neq S(w) \}. 
\]
We can divide the natural numbers as follows:
$$[0,3N)\quad [3N,9N)\quad [9N,27N) \quad ... \quad [3^{k-1} N, 3^k N)\quad [3^k N, 3^{k+1} N)\quad ...$$
Let $M$ be the maximum value of the second counter of $\mathcal{D}$ during
the simulation of the DTM on $w$ (for each member of $L$, such value exists).
$M$ must be in one of the above intervals, let's say in $[3^{k-1} N, 3^{k} N)$. It is obvious that $3M$ must be in the next interval $[3^{k} N, 3^{k+1} N)$.

The second counter can use the set $\{+3,0,-3\}$ instead of
$\{+1,0,-1\}$ for update operations, i.e. the head moves three
steps left or right instead of a single step, and 
using the internal states we can allow it to
exceed the input length up to three times its value:
when the head reaches the right end-marker it can keep track 
that it has made one \emph{``fold''} and continues moving towards the left;
thereafter, if it reaches the left end-marker, it records that it has made two folds and continues move rightward, and so on. When, after a fold, it hits the last end-marker again it can decrease the number of folds and continue.
Let $\mathsf{FOLD} \in \{0,1,2,3\}$ store the number of folds. When $\mathsf{FOLD}$ becomes 3, then the 2D1CA rejects the input immediately, i.e. the counter value reaches the value of three times of the input length.

On input $a^{ 2^{1w} 3^k}$, the second counter, that changes its value by $\{+3,0,-3\}$, will  exceed $3^k N$ but will
never try to exceed $3^{k+1} N$ ($3^k N \leq 3M < 3^{k+1} N$). So, $\mathsf{FOLD}$ must be 1 at least once and
never becomes 3. Therefore, the 2D1CA accepts the input if the
simulation ends with the decision of ``acceptance'' and $\mathsf{FOLD}$ takes a
non-zero value at least once but never takes the value of 3.

If the input is $a^{ 2^{1w} 3^{k-i}}$, for some positive integer $i$, then
the second counter must need to exceed $3^k N$, so, the $\mathsf{FOLD}$ value takes
3 before simulation terminates and the 2D1CA rejects the input.

If the input is $a^{ 2^{1w} 3^{k+i}}$, for some positive integer $i$, then
the second counter can be at most $3^{k+1}-1$, so the $\mathsf{FOLD}$ value
never takes the value of 1 during the simulation and the 2D1CA rejects
the input.
Thus, we can be sure that such $k$ has a minimum value and it corresponds
to $S(w)$ in the language $ L_{\mathcal{T}} $.
\qed\end{proof}

Note that if the language $ L $ recognized by $\mathcal{T}$ is recursive, then the same 2D1CA $\mathcal{D}$ described in Theorem~\ref{theorem:recenum} is a decider for $L_{\mathcal{T}}$. 

\subsection{A quantum simplification}
Ambainis and Watrous \cite{AW02} showed that augmenting a two-way deterministic finite automata (2DFAs) with a fixed-size quantum register\footnote{It is a constant-size quantum memory whose dimension does not depend on the input length. The machine can apply to the register some quantum operators (unitary operators, measurements, or superoperators)  determined by the classical configuration of the machine. If the operator is a measurement or a superoperator, then there can be more than one outcome and the next classical transition is also determined by this outcome, which makes the computation probabilistic. However, as opposed to using a random number generator, the machine can store some information on the quantum register and some pre-defined branches can disappear during the computation due to the interference of the quantum states, which can give some extra computational  power to the machine.} makes them more powerful than 2DFAs augmented with a random number generator. Based on a new programming technique given for fixed-size quantum registers \cite{YS10B}, it was shown that 2D1CAs having a fixed-size quantum register can recognize 
$ \{ a^{n3^n} \mid n \geq 1 \},~ \{ a^{2^n3^{2^n}} \mid n \geq 1 \}, $ or any similar language by replacing bases 2 or 3 with some other integers for any error bound \cite{Yak13D,KY14A}. Therefore, we can replace binary encoding with a unary one for Theorem 6 by enhancing a 2D1CA with a fixed-size quantum register.

\begin{theorem}
	Let $ k>2 $ and $ L \subseteq \{a_1,\ldots,a_k\}^* $ be a language recognized by a DTM in space $ 3^{|w|} $ with a work alphabet having $ k' \geq k $ elements. Then 
\[
	\left\lbrace a^{10^{x}3^{\left( 10^{x} \right)}} \mid x=1w10 \mbox{ and } w \in L \right\rbrace
\] can be recognized by a 2D1CA augmented with a fixed-size quantum register for any error bound, where $x=1w10$ is a number in base-$k'$ and $ w \in \{a_1,\ldots,a_k\}^* $.
\end{theorem}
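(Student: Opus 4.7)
My plan is to adapt the proof of Theorem~7 and its corollary, substituting the explicit $b$ separator of the two-block unary encoding with a format check carried out by the fixed-size quantum register. The cited quantum technique shows that, for any prescribed error bound, a 2D1CA augmented with a fixed-size quantum register can recognize $\{a^{n3^n}\mid n\geq 1\}$ and, by replacing the bases, any structurally analogous nested-exponential language. I would use this as a black box.

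First, I would run the quantum format check to verify that the input belongs to $\{a^{10^x\cdot 3^{10^x}}\mid x\geq 1\}$, the natural variant of $\{a^{n3^n}\}$ in which $n$ is further constrained to be a power of $10$. The error of this verification can be amplified below the desired bound by sequential repetition, since the check is logically independent of the deterministic phases that follow.

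Second, on an input that has passed the format check, I would deterministically recover $x$ and $w$. Using the counter-arithmetic idea of Lemma~\ref{lemma:divide-by-p} against the head position on the certified input $a^{10^x\cdot 3^{10^x}}$, the 2D1CA can load its counter first with $10^x$ and then with $x$, and thereafter read off the base-$k'$ digits of $x$ one at a time. It rejects if the first digit of $x$ is not $1$, the last digit is not $0$, or any interior digit is $\geq k$. This certifies the shape $x=1w10$ and simultaneously exposes the word $w\in\{a_1,\dots,a_k\}^*$ to the machine.

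Third, I would simulate the DTM on $w$ following the two-counter encoding of Section~\ref{sec:simulation-TM}. The whole working tape, of length $3^{|w|}$, is encoded as a single base-$k'$ integer, and together with the prime-power G\"{o}delization of the two-counter simulation used in Theorem~\ref{theorem:dtmsimulation} fits well within the counter's linear bound: indeed $(k')^{3^{|w|}}\ll 3^{10^x}$ whenever $x\geq|w|+2$ and $k'$ is a fixed constant, so there is ample room inside the input length $10^x\cdot 3^{10^x}$. The input head plays the role of the second counter in this simulation, and the 2D1CA accepts or rejects according to the decision reached by the DTM. The main obstacle is the first step: one must check that the cited quantum routine, phrased for $\{a^{n3^n}\}$, extends cleanly to the more constrained shape $a^{10^x\cdot 3^{10^x}}$ and composes with the subsequent classical extraction without destroying the counter's state (or that the counter can be reset before the classical phases begin). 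Once that is granted, everything downstream is deterministic and the overall error is exactly that of the initial quantum verification.
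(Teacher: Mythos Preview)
Your proposal is correct and follows essentially the same two-phase strategy as the paper: a quantum format check on the input shape, followed by the standard deterministic two-counter simulation of the DTM from Section~\ref{sec:simulation-TM}. The paper's own proof is two sentences long and simply cites the quantum algorithms of \cite{Yak13D} for the input check and then invokes the deterministic simulation; you have filled in considerably more detail (digit extraction, the space estimate $(k')^{3^{|w|}}\ll 3^{10^x}$, counter reset), but the architecture is identical.

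One remark: the concern you flag as the ``main obstacle'' is not actually an obstacle. The paragraph preceding the theorem explicitly states that the fixed-size quantum register technique handles $\{a^{2^n3^{2^n}}\mid n\geq 1\}$ and ``any similar language by replacing bases $2$ or $3$ with some other integers,'' so the shape $\{a^{10^x\cdot 3^{10^x}}\}$ is covered directly by the cited result rather than requiring a new argument. With that granted, your downstream deterministic extraction and simulation proceed exactly as in Theorem~\ref{theorem:dtmsimulation} and its corollary.
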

\begin{proof}
	Here the input check can be done by the help of the quantum register by using the corresponding quantum algorithms given in \cite{Yak13D}. Then, our standard deterministic simulation is implemented.
\qed\end{proof}

\subsection{Unary 2D1CAs versus two-counter machines}

Minsky \cite{Min61} showed that, for any given recursive language $L$ defined over $\mathbb{N}$, \[ \mathtt{UMINSKY}(L) = \{ a^{2^x} \mid x \in L \} \] can be recognized by a 2CA\footnote{The definition used by Minsky is a little different than ours but they are equivalent.}.
It is clear that $ \mathtt{UMINSKY}(L) $ is recursive enumerable if and only if $ L $ is recursive enumerable. Moreover, any language $L$ not recognizable by any $s(n)$-space DTM, $ \mathtt{UMINSKY}(L) $ cannot be recognized by any $ \log(s(n)) $-space DTM, for any $ s(n) \in \Omega(n) $. On the other hand, any language recognized by a 2D1CA is in $\mathsf{L} $ (see Footnote \ref{foot:2D1CA}). Therefore, there are many recursive and non-recursive languages recognized by 2CAs but not by 2D1CAs. 

Neverthless we believe that 2CAs and unary 2D1CAs are incomparable, i.e.
there also exist languages recognizable by a 2D1CA but not by any 2CA.  Let $k>1$, $\Sigma = \{ a_0,\ldots,a_{k-1} \}$ be the alphabet, and $ r_k : \mathbb{N} \rightarrow \Sigma^* $ be a function mapping $ n = (d_{l}\cdots d_1 d_0)_k $, $k$-ary representation of $n$, to
\[
	r_k(n) = \left\lbrace \begin{array}{lr} a_{d_0} a_{d_1}\cdots  a_{d_{l}}, & \mbox{if } n>0 \\ 
	\varepsilon, & \mbox{ if } n = 0
	\end{array} \right. .
\]
Both 2D1CAs and 2CAs can calculate $r_k(n)$ symbol by symbol on the input $a^n$, and the following is immediate:

\begin{lemma}
	\label{lem:regular-exponent}
If $R$ is a regular languages over an alphabet of $k$ symbols, then a 2D1CA can decide the language $L = \{a^n \mid r_k(n) \in R \}$. 
\end{lemma}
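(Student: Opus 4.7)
The plan is to build a 2D1CA that extracts the base-$k$ digits $d_0, d_1, \ldots, d_l$ of $n$ one at a time, in the order they appear in $r_k(n)$, while simulating in its finite-state control a DFA $A$ recognizing $R$. Since $A$ has finitely many states, feeding it the digits one by one is the natural strategy; the only nontrivial piece is to implement Euclidean division by $k$ on the unary representation using just one counter plus the two-way input head.

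First I would have the machine sweep from $\cent$ to $\dollar$, incrementing the counter at every step, so that after this phase the counter holds $n$ and the head sits on $\dollar$. If the counter is already $0$ (the $n=0$ case), the machine accepts iff the initial state of $A$ is final; this handles $r_k(0) = \varepsilon$. Otherwise I proceed to the iterative phase.

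The core subroutine, assuming the counter holds some value $c > 0$ and the head is on $\dollar$, does two things at once: it determines $c \bmod k$ and replaces $c$ by $\lfloor c/k \rfloor$. To achieve this, the machine repeatedly decrements the counter, using a cyclic mod-$k$ register in its control states to group decrements into blocks of $k$; after every $k$ successful decrements the head is moved one square to the left. When the counter hits $0$, the mod-$k$ register holds $c \bmod k$ and the head is at position $(n+1) - \lfloor c/k \rfloor$. The machine then feeds the symbol $a_{c \bmod k}$ to $A$ (just a control-state update) and sweeps the head rightward back to $\dollar$, incrementing the counter once per square; this restores the invariant with the new value $\lfloor c/k \rfloor$ in the counter and the head on $\dollar$. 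The subroutine is invoked repeatedly; when the post-division counter value is $0$, all digits of $r_k(n)$ have been consumed and the machine accepts iff $A$ is in a final state.

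Counter-boundedness is immediate: the stored value starts at $n$ and only ever decreases between rounds, so it stays in $\{0, 1, \ldots, n\}$; the input head, used as the auxiliary linearly bounded counter in the subroutine, stays within $[0, n+1]$. I do not anticipate a serious obstacle: the whole construction is essentially a specialization of the "unary input head as a linearly bounded second counter" idea used throughout the paper (as in Lemma~\ref{lemma:multihead}) to the particular task of iterated division by $k$. The only minor care needed is to make sure the mod-$k$ cycle in the control states is correctly synchronized with the left-shifts of the head and that the special case $n = 0$ is handled before entering the division loop.
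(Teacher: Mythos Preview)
Your proposal is correct and follows exactly the paper's approach: transfer $n$ to the counter, then repeatedly divide by $k$ (using the input head as an auxiliary linearly bounded counter) to peel off the base-$k$ digits $d_0,d_1,\ldots$ and feed them to a DFA for $R$. You simply spell out the division subroutine in more detail than the paper does.
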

\begin{proof}
	Suppose that $\mathcal{F}$ is a DFA that decides $R$; after transferring the input to the counter, a 2D1CA can calculate incrementally the digits $d_0, d_{1},...,d_{l}$ up to the final fixed digit: it repeatedly divides the counter by $k$, and $d_i$ is the remainder of the division; so it can simulate the transition of $\mathcal{F}$ on symbol $a_{d_i}$, and accept or reject accordingly when it reaches the last digit.
\qed\end{proof}

Hence both 2D1CAs and 2CAs can recognize the whole class of unary languages:
$$\begin{array}{c}
\mathcal{C} = \{ L \mid L = \{a^n \mid r_k(n) \in R \} \text{ and } R \text{ is a regular language}\\
\text{over an alphabet of size } k \}
\end{array}$$
As an example the family of
unary non regular languages $\{ a^{2^n} \}$ is contained in $\mathcal{C}$.
But, we conjecture that the following language cannot be recognized by 2CAs:
\[
	L_{\oplus} = \{ a^n \mid |r_2(n)|+|r_3(n)| \equiv 0 \mod 2 \},
\]
i.e. the binary representation and the ternary representation of $n$ have both even or odd length.
A 2D1CA can easily decide $L_{\oplus}$: after calculating if the length of the binary representation of $n$ is odd or even, it can recover the input using the tape
endmarkers, and then check if the length of the ternary representation of $n$ is the same.
But there is no way for a 2CA to recover the input, so it should
calculate the binary and ternary representations of $n$ in parallel, which
seems impossible.

 ~\\ \noindent\textbf{Acknowledgements.} 
We thank Alexander Okhotin, Holger Petersen, and Klaus Reinhardt for their answers to our questions on the subject matter of this paper. We also thank anonymous referees for their very helpful comments. 

\bibliographystyle{plain}
\bibliography{tcs}

\end{document}